\documentclass[a4paper,12pt]{article}

\usepackage{a4wide}
\usepackage{graphicx,subfigure,hhline}
\usepackage[colorlinks=true,linkcolor=blue,citecolor=red]{hyperref}
\usepackage{xcolor}
\usepackage{amssymb,amsthm,empheq,bbold}
\usepackage[inline]{enumitem}
\usepackage[ruled,vlined,linesnumbered]{algorithm2e}
\usepackage{caption}
\theoremstyle{plain} 
\newtheorem{theorem}{Theorem} 
\newtheorem{proposition}{Proposition}
\newtheorem{remark}{Remark}

\usepackage{subfigure,graphicx,hhline,xcolor}
\allowdisplaybreaks
\usepackage{epsfig}
\usepackage{url}
\usepackage{longtable}

\newcommand{\pa}{\partial}

\newcommand{\bx}{\mathbf{x}}
\newcommand{\bk}{\mathbf{k}}

\newcommand{\be}{\begin{equation}}
	\newcommand{\ee}{\end{equation}}
\newcommand{\ben}{\begin{equation}\nonumber}    
	
\begin{document}
		
		\title{Weakly nonlinear analysis of a reaction-diffusion model for demyelinating lesions in Multiple Sclerosis}

		
		\author{Romina Travaglini$^{1,2*}$, Rossella Della Marca$^{3}$\\[1em]
			$^1${\footnotesize INDAM -- National Institute for Advanced Mathematics  ``Francesco Severi''}
			\\{\footnotesize  Piazzale Aldo Moro 5,  00185, Roma, Italy}\\
	       $^2$ {\footnotesize Department of Mathematical, Physical and Computer Science, University of Parma}\\{\footnotesize Parco Area delle Scienze 53/A,
	       	43124, Parma, Italy}
			\\{\footnotesize {travaglini@altamatematica.it} (*corresponding author)}\\
				$^3${\footnotesize Department of Mathematics and Applications ``R. Caccioppoli",}\\ {\footnotesize  University of Naples ``Federico II",}\\ {\footnotesize  Via Cintia, Monte S. Angelo I-80126 Napoli, Italy}  \\{\footnotesize{rossella.dellamarca@unina.it}}}
		%
		%
		
		\maketitle
		
		
		\begin{abstract}
			Multiple Sclerosis is a chronic autoimmune disorder characterized by the degradation of the myelin sheath in the central nervous system, leading to neurological impairments. In this work, we analyze a reaction-diffusion model derived from kinetic theory to study the formation of demyelinating lesions. We perform a Turing instability analysis and a weakly nonlinear analysis to investigate different spatial patterns that may emerge. Our study examines how key parameters, including the squeezing probability of immune cells and the chemotactic response, impact pattern formation. Numerical simulations confirm the analytical results, revealing the emergence of distinct spatial structures.
		\end{abstract}

		\section{Introduction}
		\label{sec:1}
		Multiple Sclerosis (MS) is a chronic neurological disorder in which the immune system mistakenly attacks the brain and spinal cord. It is estimated that over 1.8 million people are affected by MS worldwide \cite{who}.
		The main pathological feature of MS is damage to the myelin sheath, which surrounds axons in the central nervous system and facilitates the transmission of nerve impulses. These lesions can be detected using  magnetic
		resonance imaging, appearing as focal plaques in the white matter. The symptoms of MS involve a gradual worsening of physical and neurological functions. As the disease progresses, individuals may experience muscle weakness, coordination issues, numbness, vision problems, and cognitive changes. These impairments tend to worsen over time as myelin damage disrupts nerve function. The severity and progression of symptoms can vary between individuals, with some individuals experiencing relapses followed by periods of partial recovery, while others may experience a steady decline. Further details about the clinical overview of MS can be found in \cite{lassmann2005multiple,lassmann2007immunopathology,lassmann2012progressive,mahad2015pathological} and references therein.
		\\\indent
		The biological mechanisms underlying MS are mainly recognized as those involving immune cells, which are T-cells, B-cells, macrophages, and microglia. These cells can become dysfunctionally activated against myelin-producing cells (oligodendrocytes) or myelin itself, triggering an autoimmune cascade. This inflammatory process is driven by the production of proinflammatory cytokines, molecules produced by immune cells, able to attract themselves and stimulate their clonal expansion.
		\\\indent
		It is important to note
		that self-reactive immune cells can also be found in non-pathological conditions \cite{danke2004autoreactive}. In such cases, the action of specific cells, known as immunosuppressors, can inhibit or eliminate activated immune cells or those cells presenting the antigen able to activate them. What is observed in autoimmune conditions like MS is that the number and effectiveness of these natural killers are compromised \cite{hoglund2013one,mimpen2020natural,zozulya2008role}.
		\\\indent
		Mathematical modeling represents a powerful framework for understanding the underlying dynamics of MS. On one hand, different models, based on ordinary differential equations \cite{frascoli2022dynamics}, stochastic dynamics \cite{bordi2013mechanistic}, and systems of partial differential equations (PDE) \cite{barresi2016wavefront,bisi2024chemotaxis,calvez2008mathematical,khonsari2007origins,lombardo2017demyelination}, have been developed to replicate key pathological features.
		In particular, the systems of PDE proposed have been extensively analyzed using Turing instability and weakly nonlinear analysis. These studies have led to the identification of spatial patterns that mimic myelin lesions typically seen in MS, specifically for type III lesions and Balo's concentric lesions. These lesions are characterized by extensive oligodendrocyte loss, limited T-cell presence, and a lack of remyelination.
		\\\indent
		On the other hand, the kinetic theory of active particles has been revealed to be effectively applicable to the autoimmune process modeling \cite{delitala2013mathematical,kolev2018mathematical,costa2021optimal,della2022mathematical,menale2024nonconservative,ramos2019kinetic}. This approach has enabled the derivation of macroscopic descriptions of key biological phenomena by starting from the mesoscopic level of cell interactions and cytokine-induced migration \cite{Oliveira2022}. It has been adapted to the specific case of MS \cite{travaglini2023reaction}, offering insights into inflammation, lesion development, and disease progression in the presence of the remyelination process, also including treatment terms \cite{travaglini2023}. The reaction-diffusion models derived in these works are analyzed through a Turing instability analysis to inquire about pattern formation.

		Our present aim is to advance the study of demyelinating lesion formation by performing a weakly nonlinear analysis of a reduced reaction-diffusion system derived from the model introduced in \cite{travaglini2023reaction}, and initially explored in one dimension. In contrast to other models that treat oligodendrocyte loss as an indirect proxy for demyelination \cite{bisi2024chemotaxis,bisi2025derivation,lombardo2017demyelination}, the formulation presented in \cite{travaglini2023reaction} incorporates an explicit state variable representing the fraction of destroyed myelin, enabling a direct and biologically precise description of demyelinating lesions. Although a two-dimensional analysis of a related model was carried out in \cite{travaglini2023}, that study documented pattern emergence without employing weakly nonlinear techniques. Here, we extend that framework by modeling explicit myelin destruction to two dimensions and applying a weakly nonlinear analysis. Thereby, we establish a novel analytical approach to investigate how key mechanistic parameters--such as squeezing probability and chemotactic sensitivity--govern the spatial organization of demyelinated areas. This methodology captures structured patterns reminiscent of histopathological findings in MS, ranging from elongated plaques \cite{multz2025multiple, peterson2001transected} to focal or concentric lesions \cite{stadelmann2005tissue, tzanetakos2020heterogeneity}. Consequently, it offers new insights into the emergence of atypical lesion morphologies that cannot be captured in one-dimensional models or in two-dimensional simulations alone without weakly nonlinear analysis.
		\\\indent
		The chapter is organized as follows: in Section \ref{sec:2}, we recall the reaction-diffusion model for MS proposed in \cite{travaglini2023reaction} and perform a reduction of it. The pattern formation is inquired in Section \ref{sec:3}, by performing Turing and weakly nonlinear analysis. Section \ref{sec:4} shows numerical simulations that confirm the results obtained analytically. In Section \ref{sec:5}, some concluding remarks are provided.
		
		\section{The model}
		\label{sec:2}

		Let us start by recalling the macroscopic reaction-diffusion model derived in  \cite{travaglini2023reaction}. The model describes the formation and restoration of myelin plaques in Multiple Sclerosis and has the following shape:
		\be\label{eq:macA}
		\begin{aligned}
			\dfrac{\pa \,A(t,\bx)}{\pa t} \,&=\, \alpha + p_{AR} \,A(t,\bx) \,R(t,\bx) - d_{AS} \,A(t,\bx) \,S(t,\bx) - d_A \,A(t,\bx), 
			\\[3mm]
			\dfrac{\pa \,S(t,\bx)}{\pa t} \,&=\, p_{SA} \,S(t,\bx) \,A(t,\bx) - d_S \,S(t,\bx),
			\\[3mm]
			\dfrac{\pa \,R(t,\bx)}{\pa t} \,&=\,
			\nabla_{\bx}\,\cdot\left[D_R\,\phi_0(\,R(t,\bx))\nabla_{\bx}\,\,\,R(t,\bx) - \chi \,\phi_1(\,R(t,\bx))\,\,R(t,\bx)  \,\nabla_{\bx}\,C\right]\\ 
			&\qquad+ p_{RA} \,R(t,\bx) \,A(t,\bx) - d_{RS} \,R(t,\bx) \,S(t,\bx) - d_R \,R(t,\bx),
			\\[3mm]
			\dfrac{\pa C(t,\bx)}{\pa t}\,&=\,D_C\, \Delta_{\bx}\,C(t,\bx) +p_{C}\,A(\bx,t)\,R(\bx,t)- d_C C(\bx,t),
			\\[3mm]
			\dfrac{\pa E(t,\bx)}{\pa t}\,&=\,(\bar E - E(t,\bx))\dfrac{\,d_{ER}\,R(t,\bx)}{\tilde r+R(t,\bx)}\,R(t,\bx)-r_E\,E(t,\bx).
		\end{aligned}
		\ee
		
		The involved quantities are macroscopic densities for:
		self-antigen presenting cells ($A$) that can activate the self-reactive immune cells against the myelin; immunosuppressive cells ($S$) that can inhibit or eliminate activated both self-antigen presenting cells and immune cells; self-reactive immune cells ($R$); cytokines ($C$) produced by immune cells when activated by the self-antigen presenting cells; and destroyed fraction of myelin ($E$).
		The variables depend on $t\in\mathbb{R}_0^+$ and $\bx\in\Gamma_{\bx}$, where $\Gamma_{\bx}$ is a bounded domain in $\mathbb{R}^2$. We underline that the kinetic description provided in \cite{travaglini2023reaction} and in the previous works like \cite{della2022mathematical,ramos2019kinetic} also include conservative dynamics, whose effects are not observable at the macroscopic level. System \eqref{eq:macA}, indeed, accounts only for the non-conservative processes, which are described below. 
		\\\indent
		The proliferative processes participating in the evolution of $A$ are represented by a constant input source $\alpha$ depending on external factors \cite{selmi2018long} and proliferative interactions with $R$. The proliferation of $S$ and $R$ comes from interactions with $A$, while $S$ destructively interacts with $A$ and $R$. The production of cytokines comes from the interaction between $A$ and $R$. Lastly, the consumption of myelin by $R$ is coupled with its restoration by oligodendrocytes, and the last equation in \eqref{eq:macA} comes from the different time scales considered in \cite{travaglini2023reaction}. Here, $\bar E$ represents the total amount of myelin, which we consider constant in space and time. We also take into account the natural decay of cell populations and cytokines.   We take as positive constants all the proliferative and destructive interaction rates, $p_{ij}$ and $d_{ij}$, respectively, the decay rates $d_i$,  with $i\in\{A,S,R,C,E\}$ and $j\in\{A,S,R\}$, the cytokines production rate $p_C$,  the saturation constant involved in demyelination $\tilde r$,  and the remyelination rate  $r_E$.
		\\\indent
		Moreover, model \eqref{eq:macA}  incorporates: the diffusive dynamics for $R$, whose coefficient is the constant $D_R$ multiplied by the function $\phi_0$ depending on the density; the diffusive dynamics for $C$, which has a constant diffusion coefficient  $D_C$; the chemotactic motion of $R$ towards regions where the  concentration
		of cytokines is higher, with a sensitivity function $\phi_1$ 
		and the maximal chemotactic rate $\chi$.  \\ 
		The functions $\phi_0$ and $\phi_1$ are taken  as follows \cite{wang2007classical}
		\be\label{FunPhi01}
		\phi_1(y)=\left(1-\left(\frac{y}{\bar R}\right)^\gamma\right)\mathbf 1_{[0,\bar R]}, \quad\phi_0(y)=\phi_1(y)-y\,\phi_1'(y). 
		\ee
		The function $\phi_1$ is referred to as the \textit{squeezing probability}, i.e., the probability of a cell finding space at its
		neighboring site. Typically, this probability decreases linearly with the cell density at that site. Here, we consider a rather general form for $\phi_1$ by assuming that the \textit{squeezing exponent} $\gamma$ in \eqref{FunPhi01} can be greater than or equal to one. The term $\mathbf 1$ is the characteristic function and $\bar R$ is the maximal density of self-reactive immune cells. 
		We report in Table \ref{Tab:param} all the variables and parameters of model (\ref{eq:macA})-(\ref{FunPhi01}) along with their description.

		\begin{longtable}{ll}
			\caption{Description of variables and parameters relevant to model
				(\ref{eq:macA})--(\ref{FunPhi01}).}
			\label{Tab:param}\\
			\hline
			Symbol & Description \\
			\hline
			\endfirsthead
			
			\hline
			Symbol & Description \\
			\hline
			\endhead
			
			$A$ & Macroscopic density for self-antigen presenting cells\\
			$S$ & Macroscopic density for immunosuppressive cells\\
			$R$ & Macroscopic density for self-reactive immune cells\\
			$C$ & Macroscopic density for cytokines\\
			$E$ & Destroyed fraction of myelin\\
			$\alpha$ & Input source of self-antigen presenting cells\\
			$p_{AR}$ & Proliferative rate of $A$ due to interactions with $R$\\
			$p_{SA}$ & Proliferative rate of $S$ due to interactions with $A$\\
			$p_{RA}$ & Proliferative rate of $R$ due to interactions with $A$\\
			$p_C$ & Production rate of cytokines\\
			$d_{AS}$ & Destructive rate of $A$ due to interactions with $S$\\
			$d_{RS}$ & Destructive rate of $R$ due to interactions with $S$\\
			$d_{ER}$ & Destructive rate of $E$ due to interactions with $R$\\
			$d_A$ & Natural decay rate of self-antigen presenting cells\\
			$d_S$ & Natural decay rate of immunosuppressive cells\\
			$d_R$ & Natural decay rate of self-reactive immune cells\\
			$d_C$ & Natural decay rate of cytokines\\
			$D_R$ & Diffusion coefficient for self-reactive immune cells\\
			$D_C$ & Diffusion coefficient for cytokines\\
			$\chi$ & Maximal chemotactic rate for self-reactive immune cells\\
			$\gamma$ & Squeezing exponent\\
			$\bar R$ & Maximal density of self-reactive immune cells\\
			$\bar E$ & Density of total myelin (both sane and destroyed)\\
			$\tilde r$ & Saturation parameter for demyelination\\
			$r_E$ & Remyelination rate\\
			\hline
		\end{longtable}
		
		The purpose of the present work is to investigate the formation of lesions in the myelin sheath, caused by the inflammation which is, in turn, induced by the interplay between self-reactive immune cells and cytokines.
		To this aim, we assume that the populations of  self-antigen-presenting cells and immunosuppressive cells are in an equilibrium state, i.e., their temporal derivatives are zero.
		Consequently, we may derive a reduced form of the system \eqref{eq:macA}, as stated in the following proposition.
		\begin{proposition}
			Let us suppose that the time derivatives of the two populations of self-antigen presenting cells $A(t,\bx)$ and immunosuppressive cells $S(t,\bx)$ are zero for all $\bx\in\Gamma_{\bx}$ and that the two populations are not constantly equal to zero. Then, the system \eqref{eq:macA} can be reduced to three equations for the quantities $R(t,\bx)$, $C(t,\bx)$, and $E(t,\bx)$, and rewritten in the dimensionless form 
			\be\label{eq:rdsR}
			\begin{aligned}
				\frac{\pa R}{\pa t}=\,&\nabla_{\bx}\cdot\left(\Phi_0(R)\,\nabla_{\bx}\, R-{\xi} \,\Phi_1(R)\,\nabla_{\bx}\,C\right)+R\,(1-R) , \\[2mm]
				\frac{\pa C}{\pa t}=\,&\delta\Delta_{\bx}\,C+\beta\,R-\tau\,C,\\[2mm]
				\frac{\pa E}{\pa t}=\,&\frac{\theta\,R^2}{\Omega+R}\left(1-E\right)-\zeta\,E.
			\end{aligned}
			\ee
		\end{proposition}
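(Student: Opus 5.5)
Setting $\partial_t A=\partial_t S=0$ in the first two equations of \eqref{eq:macA} gives algebraic relations that we solve pointwise in $\bx$. The second equation reads $S\,(p_{SA}A-d_S)=0$. Since $S$ is not identically zero, we take the branch $A\equiv A^*:=d_S/p_{SA}$, which is a positive constant. Substituting this into the first equation gives
\[
\alpha+p_{AR}A^*R-d_{AS}A^*S-d_AA^*=0 .
\]
Hence
\[
S=\frac{\alpha-d_AA^*+p_{AR}A^*R}{d_{AS}A^*},
\]
which is affine in $R$, say $S=s_0+s_1R$ with
\[
s_0=\frac{\alpha-d_AA^*}{d_{AS}A^*},\qquad s_1=\frac{p_{AR}}{d_{AS}}.
\]

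Inserting $A^*$ and this expression for $S$ into the $R$-equation turns the reaction part into a logistic term:
\[
p_{RA}A^*R-d_{RS}R(s_0+s_1R)-d_RR=\rho R\left(1-\frac{R}{K}\right),
\]
where $\rho=p_{RA}A^*-d_{RS}s_0-d_R$ and $K=\rho/(d_{RS}s_1)$. In the $C$-equation the production term becomes $p_CA^*R$, which is linear in $R$. The $E$-equation involves only $R$ and $E$, so the system closes in $(R,C,E)$.

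Next we nondimensionalize. We set $R=K\tilde R$, $t=\tilde t/\rho$, and $\bx=\sqrt{D_R/\rho}\,\tilde\bx$, and we rescale $C$ by a constant $C_0$. With these choices the $R$-reaction term becomes $\tilde R(1-\tilde R)$ and the diffusion prefactor becomes one. Define $\Phi_0(\tilde R)=\phi_0(K\tilde R)$ and $\Phi_1(\tilde R)=K\tilde R\,\phi_1(K\tilde R)/K$, absorbing constants as needed; this yields $\xi=\chi C_0/D_R$. The $C$-equation then gives $\delta=D_C/D_R$, $\tau=d_C/\rho$, and $\beta=p_CA^*K/(\rho C_0)$; one may choose $C_0$ so that $\beta$ or $\xi$ is normalized. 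Because $E$ is already a fraction, we rescale $E=\bar E\tilde E$. This gives $\theta=d_{ER}K\bar E/\rho$ (or $d_{ER}K/\rho$, depending on the scaling of $E$), $\Omega=\tilde r/K$, and $\zeta=r_E/\rho$. Dropping tildes yields \eqref{eq:rdsR}.

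The main obstacle is bookkeeping of signs and admissibility rather than any deep step. The logistic form requires $\rho>0$ and $s_1>0$ (the latter always holds), so we implicitly need $p_{RA}A^*>d_R+d_{RS}s_0$. We must also keep $S\ge0$, which requires $\alpha\ge d_AA^*$ or, more weakly, positivity for the relevant values of $R$. I would state these as the natural positivity conditions under which the proposition holds. I would also verify that the rescaling of the nonlinear diffusion and chemotaxis terms is consistent, since $\phi_0,\phi_1$ depend on $R/\bar R$, which becomes $\tilde R\,K/\bar R$. This ratio is absorbed into the definitions of $\Phi_0$ and $\Phi_1$, with the effective maximal density $\bar R/K$ treated as a dimensionless parameter.
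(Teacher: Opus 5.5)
Your proposal is correct and is essentially the paper's proof. It uses the same branch $A=d_S/p_{SA}$ with $S$ affine in $R$, a logistic reaction whose $\rho$ and $d_{RS}s_1$ are the paper's $\lambda$ and $\mu$, and the same rescalings of $t$, $\bx$, $R$ (by $K=\lambda/\mu$, so $\mathcal R=\bar R/K$) and $E$ (by $\bar E$). The paper merely fixes $C_0=\lambda\bar\psi/(\mu d_C)$ where you leave it free. With that choice your $\xi=\chi C_0/D_R=\chi\lambda\bar\psi/(\mu D_R d_C)$ is what the substitution actually gives, whereas the paper's stated $\xi$ omits the factor $\lambda/\mu$.

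The remaining differences are bookkeeping:
\begin{itemize}
\item You absorb the factor $R$ of the chemotactic flux into $\Phi_1$ so that the display holds literally. The paper instead keeps $\Phi_1$ as the rescaled squeezing function, so that $\Phi_0=\Phi_1-y\,\Phi_1'$ and the flux reads $\xi\,\Phi_1(R)\,R\,\nabla_{\bx}C$, as in its no-flux condition.
\item Since $\bar E$ cancels under $E=\bar E\tilde E$, only your second option, $\theta=d_{ER}K/\rho=d_{ER}/\mu$, is right.
\end{itemize}
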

		\begin{proof}
			By equalizing to zero the first two equations of system \eqref{eq:macA}, we get
			\be\label{ASMac}
			A(t,\bx)=\frac{d_S}{p_{SA}} ,\qquad
			S(t,\bx)= \frac{\alpha\, p_{SA}-d_A d_S}{d_S\, d_{AS}} + \frac{p_{AR}}{d_{AS}}\,R(t,\bx),
			\ee
			by assuming that $S(t,\bx)\neq 0$ and $A(t,\bx)\neq 0$.
			We report here that the relations above may be obtained starting from the kinetic description and considering different time scales for the different processes. Nevertheless, we leave this computation for future work. 
			By plugging the expressions \eqref{ASMac} into  \eqref{eq:macA},
			we get
			
			\begin{align}
				\dfrac{\pa \,A(t,\bx)}{\pa t} \,&=0, 
				\\[3mm]\nonumber
				\dfrac{\pa \,S(t,\bx)}{\pa t} \,&=0,
				\\[3mm]\nonumber
				\dfrac{\pa \,R(t,\bx)}{\pa t} \,&=\,	\nabla_{\bx}\,\cdot\left[D_R\,\phi_0(R(t,\bx))\,\nabla_{\bx}\,R(t,\bx) - \chi \,\phi_1(R(t,\bx))\,R(t,\bx)  \,\nabla_{\bx}\,C\right]\\ \nonumber
				&\qquad  - d_{RS} \,R(t,\bx) \,\left(\frac{\alpha\, p_{SA}-d_A d_S}{d_S\, d_{AS}} + \frac{p_{AR}}{d_{AS}}\,R(t,\bx)\right)\\ \nonumber
				&\qquad  + R(t,\bx)\left(p_{RA}  \,\frac{d_S}{p_{SA}} - d_R \right),
				\\[3mm]\nonumber
				\dfrac{\pa C(t,\bx)}{\pa t}\,&=\,D_C\, \Delta_{\bx}\,C(t,\bx) +p_{C}\,\frac{d_S}{p_{SA}}\,R(\bx,t)- d_C C(\bx,t),
				\\[3mm]\nonumber
				\dfrac{\pa E(t,\bx)}{\pa t}\,&=\,(\bar E - E(t,\bx))\dfrac{\,d_{ER}\,R(t,\bx)}{\tilde r+R(t,\bx)}\,R(t,\bx)-r_E\,E(t,\bx).
			\end{align}
			
			Thus, defining the coefficients 
			\ben
			\lambda={\frac{p_{RA}d_S}{p_{SA}}+d_{RS}\frac{ d_A\,d_S-\alpha p_{SA}}{d_S\, d_{AS}}}-d_R,\quad \mu=\frac{p_{AR}\,d_{RS}}{d_{AS}},\quad
			\psi=\dfrac{p_{C}\,d_S}{p_{SA}},
			\ee
			we obtain a system of equations, uncoupled from the first two, for the densities $R$, $C$, $E$ 
			\be\label{eqMmac}	
			\begin{aligned}
				\frac{\pa R(t,\bx)}{\pa t} &=
				\nabla_{\bx}\,\cdot\left[D_{ R}\,\phi_0(R(t,\bx))\nabla_{\bx}\,\,R(t,\bx) - \chi \,\phi_1(R(t,\bx))\,R(t,\bx)  \,\nabla_{\bx}\,C\right]\\ &
				\qquad + R(t,\bx) \left(\lambda-\mu R(t,\bx)\right)\\
				\frac{\pa C(t,\bx)}{\pa t}&=D_C\, \Delta_{\bx}\,C(t,\bx) + \psi\,  R(\bx,t)- d_C C(\bx,t),
				\\
				\dfrac{\pa E(t,\bx)}{\pa t}\,&=\,(\bar E - E(t,\bx))\dfrac{\,d_{ER}\,R(t,\bx)}{\tilde r+\,R(t,\bx)}\,R(t,\bx)-r_EE(t,\bx).
			\end{aligned}
			\ee
			At this point, we find it convenient to perform the change of variables
			\ben \tilde t=\lambda\, t,\qquad \tilde{\bx}=\sqrt{\frac{\lambda}{D_R}}\bx,\ee 
			in this way, the partial derivatives are
			\ben
			\frac{\partial}{\partial t} = \lambda \, \frac{\partial}{\partial \tilde{t}}, 
			\qquad 
			\nabla_{\bx} = \sqrt{\frac{\lambda}{D_R}} \, \nabla_{\tilde{\bx}}, 
			\qquad 
			\Delta_{\bx} = \frac{\lambda}{D_R} \, \Delta_{\tilde{\bx}},
			\ee
			so that \eqref{eqMmac} becomes (omitting the dependence on space and time) 
			\ben
			\begin{aligned}
				\frac{\partial R}{\partial \tilde t} &= 
				\nabla_{\tilde \bx} \cdot \left[ 
				\phi_0(R)\, \nabla_{\tilde \bx} R 
				- \frac{\chi}{D_R}\, \phi_1(R)\, R\, \nabla_{\tilde\bx}  C 
				\right]
				+ R \left( 1 - \frac{\mu}{\lambda} R \right), \\[6pt]
				\frac{\partial C}{\partial \tilde t} &= 
				\frac{D_C}{D_R}\, \Delta_{\tilde \bx} C 
				+ \frac{\psi}{\lambda} R 
				- \frac{d_C}{\lambda} C, \\[6pt]
				\frac{\partial E}{\partial \tilde t} &= 
				\frac{d_{ER}}{\lambda}\, (\bar{E} - E)\, 
				\frac{R^2}{\tilde{r} + R} 
				- \frac{r_E}{\lambda} E,
			\end{aligned}
			\ee
			where we have divided both sides of each equation by $\lambda$.
			The system above can now be written in the dimensionless form by setting
			\ben 
			\tilde R=\frac{\mu}{\lambda}\,{R},\quad
			\tilde C=\frac{\mu\,d_C}{\lambda\,\bar\psi}\,{C},\quad
			\tilde E=\frac{E}{\bar E},
			\ee
			with $\bar\psi$ the maximal cytokines production rate for $R$-cell, providing
			\ben
			\begin{aligned}
				\frac{ \partial\tilde R}{\partial \tilde t} &= 
				\nabla_{\tilde \bx} \cdot \left[ 
				\phi_0\left(\frac{\lambda}{\mu}\tilde R\right)\, \nabla_{\tilde \bx} \tilde R 
				- \frac{\chi}{D_R}\,\,\frac{\psi}{d_C}\, \phi_1\left(\frac{\lambda}{\mu}\tilde R\right)\, \tilde R\, \nabla_{\tilde\bx} \tilde C 
				\right]
				+ \tilde R \left( 1 - \tilde R \right), \\[6pt]
				\frac{\partial\tilde C}{\partial \tilde t} &= 
				\frac{D_C}{D_R}\, \Delta_{\tilde \bx}\tilde C 
				+ \frac{\psi\,d_C}{\bar\psi\,\lambda} \tilde R 
				- \frac{d_C}{\lambda} \tilde C, \\[6pt]
				\frac{\partial \tilde  E}{\partial \tilde t} &= 
				\frac{d_{ER}}{\mu}\, (1 - \tilde  E)\, 
				\frac{\tilde R^2}{\tilde{r}\dfrac{\mu}{\lambda} + \tilde R} 
				- \frac{r_E}{\lambda} \tilde E.
			\end{aligned}
			\ee
			We can now define the new coefficients of the model {as}
			$$
			\xi=\chi\,\frac{\bar \psi}{D_R d_C\,},\quad
			\delta=\frac{D_C}{D_M},\quad
			\beta=\frac{\psi\,d_C}{\bar \psi\,\lambda},$$
			\ben\tau=\frac{d_C}{\lambda},\quad
			\theta=\frac{d_{ER}}{\mu},\quad\Omega=\frac{\tilde r\,\mu}{\lambda},\quad\zeta=\frac{r_E}{\lambda},
			\ee
			moreover, we have 
			\ben
			\phi_1\left(\frac{\lambda}{\mu}\tilde R\right)=\left(1-\left(\frac{\lambda}{\mu}\frac{\tilde R}{\bar R}\right)^\gamma\right)\mathbf 1_{U}(1-\gamma),
			\ee
			where $U$ is the interval $\left\{0\leq {\lambda \tilde R}/{\mu}\leq\bar R\right\}$,
			and 
			\ben
			\phi_0\left(\frac{\lambda}{\mu}\tilde R\right)=\phi_1\left(\frac{\lambda}{\mu}\tilde R\right)-\frac{\lambda}{\mu}\tilde R\,\frac{d\,\phi_1}{d\,R}\left(\frac{\lambda}{\mu}\tilde R\right)=\phi_1\left(\frac{\lambda}{\mu}\tilde R\right)-\tilde R\frac{d\,\phi_1}{d\,\tilde R}\left(\frac{\lambda}{\mu}\tilde R\right).
			\ee
			Thus, we introduce the quantity $\mathcal{R}={\mu \bar R}/{\lambda}$ and functions 
			\be\label{FunPPhi01}
			\Phi_1( y)=\left(1-\left(\frac{y}{\mathcal R}\right)^\gamma\right)\mathbf 1_{[0,\mathcal R]}(1-\gamma), \quad\Phi_0(y)=\Phi_1(y)-y\,\Phi_1'(y).
			\ee
			At this point, by renaming the non-dimensional variables and densities removing the tilde, we get the system in \eqref{eq:rdsR}, and this completes the proof.
			\begin{flushright}$\square$\end{flushright}
		\end{proof}

		\section{Pattern formation analysis}
		\label{sec:3}
		
		In this section, we investigate the properties of the parameters of system \eqref{eq:rdsR} to lead to the formation of spatial patterns. To this aim, we add to system \eqref{eq:rdsR} the non-negative initial data:
		\ben
		\mathbf{W}(0,\bx)=\mathbf{W}_0(\bx)\geq 0, \mbox{ with } \mathbf{W} = (R, C, E) \mbox{ and } R_0\leq \mathcal R,\,E_0<1.
		\ee
		We do not consider a maximum level of cytokines, as their production can vary considerably throughout the course of Multiple Sclerosis \cite{Amoriello2024,Kamma2022}. We also impose zero-flux conditions at the boundary:
		\be\label{NoFlux}
		\left(\Phi_0(R)\,\nabla_{\bx}\, R-{\xi} \,\Phi_1(R) R\,\nabla_{\bx}\,C \right)\cdot {\bf \widehat n}=0,\quad \nabla_{\bx} C\cdot {\bf \widehat n}=0,
		\ee
		being  ${\bf \widehat n}$ the external unit normal to {the boundary} $\pa\Gamma_{\bx}$. 
		We remark that zero-flux boundary conditions are used to prevent external inputs, allowing patterns to emerge intrinsically from the system. This aligns with classical Turing instability results in biological systems \cite{murray2003mathematical}, where spatial structures arise from internal dynamics rather than boundary effects. From a biological point of view, this is reasonable because, in the brain, the blood–brain barrier effectively isolates the tissue from the external environment. In the context of Multiple Sclerosis, we assume that autoreactive lymphocytes have already infiltrated the central nervous system through the barrier, so the subsequent evolution of demyelinating plaques occurs within this closed environment.
		
		\subsection{Turing instability}
		\label{subsec:2.2}
		We want to investigate the capability of the model to describe the formation of plaques in the brain constituted by destroyed myelin. Hence, we perform a Turing instability analysis \cite{turing52} of system \eqref{eq:rdsR} with boundary conditions \eqref{NoFlux}.
		
		The necessary conditions for Turing instability to occur are that the system without spatial gradients has a spatially homogeneous steady state, which turns unstable when adding diffusive and chemotactic terms.

		By straightforward computations, we may detect the only nonzero equilibrium of system \eqref{eq:rdsR}, given by
		$${\bf W}_1= \left(1,\,\frac{\beta}{\tau},\,\frac{\theta}{\theta+\zeta\,(1+\Omega)}\right),$$ which is admissible provided that $\mathcal R>1$.
		By linearizing the system  \eqref{eq:rdsR} around the equilibrium $\bf W_1$, we have
		\be\label{SistW}
		\displaystyle \frac{\pa{\bf W}}{\pa t}={\mathbb D}\,\Delta_{\bf x}{\bf W}+{\mathbb J}\,{\bf W} \; \mbox{on} \; (0,\infty)\times\Gamma_{\bx},
		\ee
		the diffusion matrix $\mathbb D$
		and the
		Jacobian $\mathbb J$ write
		\ben 
		{\mathbb D} = \left(
		\begin{array}{ccc}
			\tilde\Phi_0& -\xi\, \tilde\Phi_1& 0 \\[1mm]
			0 & \delta & 0 \\[1mm]
			0 & 0 & 0
		\end{array}
		\right),\quad  \tilde\Phi_l=\Phi_l(1),\,l=0,1,
		\ee
		and
		\ben
		\mathbb J=
		\begin{pmatrix}
			-1 & 0 & 0 \\
			\beta & -\tau & 0 \\
			\dfrac{\zeta\, \theta\, (1 + 2 \Omega)}{(1 + \Omega)(\zeta + \theta + \zeta\, \Omega)} & 0 & -\dfrac{\zeta + \theta + \zeta\, \Omega}{1 + \Omega}
		\end{pmatrix},
		\ee
		respectively.
		It is immediate to check that the equilibrium $\bf W_1$ is locally asymptotically stable in spatially homogeneous conditions, i.e., when the diffusive and chemotactic processes do not come into play in system \eqref{eq:rdsR}.
		Indeed, the characteristic polynomial of the Jacobian is
		$$
		P(\lambda)=-(1 + \lambda)(\lambda + \tau)\left(\lambda+ \frac{\zeta + \theta + \zeta \, \Omega}{1 + \Omega}\right),
		$$
		so that the eigenvalues are $
		\left\{ -1,  -\tau, -({\zeta + \theta + \zeta \, \Omega})/({1 + \Omega})\right\}.
		$
		Since all eigenvalues have negative real parts, by the standard linearization theory for ordinary differential equations (see, e.g., \cite{arnold1992ordinary}), the equilibrium ${\bf W_1}$ is asymptotically stable.

		Considering the full system \eqref{SistW}, we expand the solution in a Fourier series:
		\be\label{Wgen}
		{\bf W}({\bf x},t) = \sum_k c_k \, e^{\lambda_k t} \, \overline{\bf W}_k({\bf x}),
		\ee
		where the eigenfunctions $\overline{\bf W}_k({\bf x})$ solve the time-independent problem
		\ben
		\begin{cases}
			\Delta_{\bf x} \overline{\bf W}_k + k^2 \overline{\bf W}_k = {\bf 0}, & \text{on } \Gamma_{\bf x},\\[1mm]
			{\bf \hat n} \cdot \nabla_{\bf x} \overline{\bf W}_k = 0, & \text{on } \partial \Gamma_{\bf x}.
		\end{cases}
		\ee
		Substituting \eqref{Wgen} into \eqref{SistW} shows that, for each wavenumber $k$, the growth rate $\lambda_k$ is an eigenvalue of the matrix
		\[
		\mathbb J - k^2 \mathbb D.
		\]
		
		Thus, instability occurs if there exists a range of wavenumbers $[k_1,k_2]$ such that $\operatorname{Re}(\lambda_k) > 0$ for $k \in [k_1,k_2]$. For our system, the eigenvalues of $\mathbb J - k^2 \mathbb D$ are
		\begin{align*}
			\lambda_k^0 &= - \frac{\zeta + \theta + \zeta \, \Omega}{1 + \Omega},\\[1mm]
			\lambda_k^{\pm} &= \frac{1}{2} \left( -1 - \tau - k^2 (\delta + \tilde\Phi_0)  \pm \sqrt{\left( -1 - \tau - k^2 (\delta + \tilde\Phi_0)\right)^2-4\,h(k^2)} \right),
		\end{align*}
		where $$h(k^2)=-\dfrac{\left(1 + \Omega\right)\,\mbox{Det}\left(\mathbb J - k^2 \mathbb D\right)}{\zeta + \theta + \zeta\, \Omega}=k^4 \delta\, \tilde\Phi_0 + k^2 \Lambda + \tau, \quad 
		\Lambda = \delta + \tau\, \tilde\Phi_0 - \beta\, \xi\,\tilde\Phi_1.$$
		Since $\lambda_k^0 < 0$, instability can only arise from $\lambda_k^{\pm}$. This requires $h(k^2)<0$ for some $k^2$ and corresponds to imposing 
		$$
		\Lambda < 0 \quad \text{and} \quad \Lambda^2 - 4 \delta\,\tilde\Phi_0\, \tau > 0.
		$$
		The first condition gives
		$
		\xi > ({\delta + \tau \, \tilde{\Phi}_0})/({\beta \, \tilde{\Phi}_1}),
		$  
		while the second condition leads to two possibilities:
		$$
		\xi < \frac{\delta + \tau \, \tilde{\Phi}_0 - 2 \sqrt{\delta\, \tau\, \tilde{\Phi}_0} }{\beta \, \tilde{\Phi}_1}
		\quad \text{or} \quad
		\xi > \frac{\delta + \tau \, \tilde{\Phi}_0 + 2 \sqrt{\delta\, \tau\, \tilde{\Phi}_0}}{\beta \, \tilde{\Phi}_1}.
		$$
		This leads to the Turing instability condition
		\be\label{TurCond}
		\xi > \xi_c := \frac{\delta + \tau\, \tilde\Phi_0+2 \sqrt{\delta\, \tau \tilde\Phi_0} }{\beta\, \tilde\Phi_1}.
		\ee
		The corresponding critical wavenumber such that $h(k^2)=0$ is
		\be\label{CappaCri}
		k_c^2 = \frac{\sqrt{\Lambda ^2-4 \delta \,\tau \,\tilde\Phi_0}-\Lambda_c }{2 \delta \,\tilde\Phi_0}= \sqrt{\frac{\tau}{\delta\, \tilde\Phi_0}},\quad \mbox{ where } \Lambda_c= \delta + \tau\, \tilde\Phi_0 - \beta\, \xi_c\,\tilde\Phi_1.
		\ee

		\subsection{Weakly nonlinear analysis}\label{SubWeak}
		
		Once the conditions on parameters leading to the formation of spatial patterns are outlined, we want to reproduce a richer scenario beyond what Turing analysis alone can achieve. Specifically, we perform a two-dimensional weakly nonlinear analysis of the problem, that relies on the fact that, when the control parameter $\xi$ approaches its critical threshold, the system’s dynamics evolve more slowly. This property makes it possible to study pattern formation through the use of amplitude equations.  
		
		More precisely, as discussed for instance in \cite{walgraef2012spatio}, each possible {steady-state configuration} of the reaction--diffusion system corresponds to a {planform} defined by $m$ pairs of wave vectors $({\bf k}_j , -{\bf k}_j)$, For {critical modes}, those satisfying $|{\bf k}_j| = k_c$, the solution can be written as 
		\be\label{SolAj}
		\widetilde{\bf U} = \sum_{j=1}^m \left[ {\bf A}_j(t) \, e^{i\,{\bf k}_j \cdot \bx} + \overline{{\bf A}}_j(t) \, e^{-i\,{\bf k}_j \cdot \bx} \right],
		\ee
		where ${\bf A}_j$ denotes the {complex amplitude vector} associated with the mode ${\bf k}_j$, and $\overline{{\bf A}}_j$ is its {complex conjugate}.  
		Depending on the value of $m$ and the geometric relations among the wave vectors, different {pattern symmetries} can be investigated. In this work, we focus on the case $m = 2$.
		
		Our aim is, then, to recover an analytical shape for the amplitudes ${\bf A}_j$ when the system evolves from a perturbation of the equilibrium state, thus, we start by
		performing a Taylor expansion of system  \eqref{eq:rdsR} up to the third
		order around the equilibrium $\bf W_1$, getting
		\be\label{SistComp1}
		\begin{aligned}
			\frac{\partial {\bf U}}{\partial t} &= \mathcal L\,{\bf U} + \mathcal H[{\bf U}]+O({\bf U}^3), \quad \mbox{for} \quad
			{\bf U}=\left(\begin{array}{c}u\\v\\w\end{array}\right)=\bf W-\bf W_1,
		\end{aligned}
		\ee
		where $\mathcal L  ={\mathbb D}\Delta_{\bf x}+ {\mathbb J}$ and $\mathcal H[{\bf U}]$ is
		\ben
		\begin{pmatrix}
			\displaystyle-u^2 +\left(\tilde{\Phi}_0'+\dfrac12\tilde{\Phi}_0''u\right)\,u\,\Delta_{\bf x}u +\left(\tilde{\Phi}_0'+\tilde{\Phi}_0''u\right)\,||\nabla_{\bf x} u||^2+\\-\xi\,\left[\left(\tilde{\Phi}_1'+\dfrac12\tilde{\Phi}_1''u\right)\,u\,\Delta_{\bf x}v +\left(\tilde{\Phi}_1'+\tilde{\Phi}_1''u\right)\,\nabla_{\bf x} u\cdot\nabla_{\bf x} v\right] \\[5mm]
			0\\[5mm]
			\dfrac{\theta}{\left(1+\Omega\right)^2}\left( \dfrac{2 \, u^2 \, \zeta \, \Omega^2}{\theta + \zeta \, \left(1 + \Omega\right)}- u \, w \, \left(1 + 2 \, \Omega\right)+\right.\\\left. 
			- \dfrac{u^3 \, \zeta \, \Omega^2}{\left(1 + \Omega\right) \left(\theta + \zeta \, \left(1 + \Omega\right)\right)} -\dfrac{u^2 \, w \, \Omega^2}{1 + \Omega} 
			\right)
		\end{pmatrix}, 
		\ee
		where $\tilde\Phi_l'=\Phi_l'(1), \tilde\Phi_l''=\Phi_l''(1),\,l=0,1$.

		As mentioned before, we can gather information on the formation and shape of patterns when the parameter $\xi$ is close to the critical threshold $\xi_c$ given in \eqref{TurCond}, thus we expand it in terms of a small parameter $\eta$ as follows
		\be\label{expxi}
		\xi=  \xi_c +\eta\,\xi_1+\eta^2\,\xi_2+\eta^3\,\xi_3+O(\eta^3).
		\ee
		Similarly, the solution vector ${\bf U}$ can be expanded as
		\be\label{UExp}
		{\bf U} = \eta
		\left(
		\begin{array}{c}u_1 \\ v_1\\w_1 \end{array}
		\right)
		+ \eta^2 
		\left(
		\begin{array}{c}u_2 \\ v_2\\w_2 \end{array}
		\right)
		+ \eta^3
		\left(
		\begin{array}{c}u_3 \\ v_3\\w_3 \end{array}
		\right)
		+ O(\eta^3).
		\ee
		When the bifurcation parameter approaches the threshold, the pattern's amplitude changes slowly over time, since the 
		leading term of the nonlinear expansion of the solution is the product of the basic pattern and a slowly varying amplitude \cite{hoyle2006pattern,vanhecke1994amplitude}. This enables a distinction between fast and slow temporal dynamics, 
		allowing us to introduce multiple time scales as follows:
		\be\label{timeExp}
		\frac{\partial}{\partial t} = \eta \frac{\partial}{\partial T_1} + \eta^2 \frac{\partial}{\partial T_2} + O(\eta^2). 
		\ee
		We can now substitute the expansions \eqref{UExp}-\eqref{timeExp} in \eqref{SistComp1},  and collecting the terms in the same order of $\eta$ on each side, we obtain three equations, one for each order:
		\begin{itemize}
			\item[-] order $\eta$:
			\be\label{SistOrd1}
			\mathcal L_c
			\left(
			\begin{array}{c}u_1 \\ v_1\\w_1 \end{array}
			\right)
			=0\quad
			\mbox{ 
				with 
			}
			\mathcal L_c  = \mathbb D_c\, \Delta_{\bf x}+\mathbb J,\quad \mathbb D_c=\begin{pmatrix}
				\tilde\Phi_0 &-\xi_c\,\tilde\Phi_1& 0\\[2mm]
				0 &  \delta\,&0\\[2mm]
				0  &0  & 0
			\end{pmatrix}
			\ee
			\item[-] order $\eta^2$:
			\be\label{EqOrd2}
			\frac{\pa}{\pa T_1}\left(
			\begin{array}{c}u_1 \\ v_1\\w_1 \end{array}
			\right)=	
			\mathcal L_c
			\left(
			\begin{array}{c}u_2 \\ v_2\\w_2 \end{array}
			\right)
			+ \mathcal H_2\left[\left(
			\begin{array}{c}u_1 \\ v_1\\w_1 \end{array}
			\right)\right]
			\ee
			with 
			$$
			\mathcal H_2\left[\left(
			\begin{array}{c}u_1 \\ v_1\\w_1 \end{array}
			\right)\right]=$$ 	\ben
			\left(
			\begin{array}{c}\tilde{\Phi}_0'\, \nabla_{\bf x}\cdot\left(u_1 \nabla_{\bf x} u_1\right)-\xi_c\, \tilde{\Phi}_1' \nabla_{\bf x}\cdot\left(u_1 \nabla_{\bf x} v_1\right)-\xi_1\,d_{12}\Delta_{\bf x}v_1- u_1^2 
				\\[2mm] 0\\[2mm]\dfrac{\theta}{\left(1+\Omega\right)^2}\left( \dfrac{2 \, u_1^2 \, \zeta \, \Omega^2}{\theta + \zeta \, \left(1 + \Omega\right)}- u_1 \, w_1 \, \left(1 + 2 \, \Omega\right) \right)\end{array}
			\right)
			\ee
			\item[-] order $\eta^3$:
			\be\label{EqOrd3}
			\frac{\pa}{\pa T_1}\left(
			\begin{array}{c}u_2 \\ v_2\\w_2 \end{array}
			\right)+	
			\frac{\pa}{\pa T_2}\left(
			\begin{array}{c}u_1 \\ v_1\\w_1 \end{array}
			\right)=	
			\mathcal L_c
			\left(
			\begin{array}{c}u_3 \\ v_3\\w_3 \end{array}
			\right)
			+ \mathcal H_3\left[\left(
			\begin{array}{c}u_1 \\ v_1\\w_1 \end{array}
			\right),\left(
			\begin{array}{c}u_2 \\ v_2\\w_2 \end{array}
			\right)\right]
			\ee
			with
			$$
			\mathcal H_3\left[\left(
			\begin{array}{c}u_1 \\  v_1\\w_1 \end{array}
			\right),\left(
			\begin{array}{c}u_2 \\v_2\\w_2 \end{array}
			\right)\right]=
			$$
			\ben
			\left(
			\begin{array}{c}\tilde{\Phi}_0'\, \nabla_{\bf x}\cdot\left(u_1 \nabla_{\bf x} u_2+ u_2 \nabla_{\bf x} u_1\right)+\\-
				\xi_c\left[\tilde{\Phi}_1'\, \nabla_{\bf x}\cdot\left(u_1 \nabla_{\bf x} v_2+ u_2 \nabla_{\bf x} v_1\right)+\dfrac12\tilde{\Phi}_1''\, \nabla_{\bf x}\cdot\left(u_1^2 \nabla_{\bf x} v_1 \right)\right]+\\[2mm]+\dfrac12\tilde{\Phi}_0''\, \nabla_{\bf x}\cdot\left(u_1^2 \nabla_{\bf x} u_1 \right)-\xi_1\, \tilde{\Phi}_1' \nabla_{\bf x}\cdot\left(u_1 \nabla_{\bf x} v_1\right)+\\-\xi_1\,d_{12}\Delta_{\bf x}v_2-\xi_2\,d_{12}\Delta_{\bf x}v_1-u_1\,u_2\\[5mm] 0\\[5mm] \dfrac{\theta}{\left(1+\Omega\right)^2}\left( \dfrac{4 \, u_1\,u_2 \, \zeta \, \Omega^2}{\theta + \zeta \, \left(1 + \Omega\right)}\right.+\\\left.- \left(u_1\,w_2+u_2\,w_1\right) \left(1 + 2 \, \Omega\right) 
				- \dfrac{u_1^3 \, \zeta \, \Omega^2}{\left(1 + \Omega\right) \left(\theta + \zeta \, \left(1 + \Omega\right)\right)} -\dfrac{u_1^2 \, w_1 \, \Omega^2}{1 + \Omega} 
				\right) \end{array}
			\right).
			\ee
		\end{itemize}
		
		Each one of the equalities above can provide information on the corresponding term of the expansion \eqref{Expu1} or its time derivative. We start from the equation of order $\eta$ and claim the following result.
		\begin{proposition}\label{Prop2}
			The solution of \eqref{SistOrd1} can be expressed as
			\be\label{Expu1}
			\left(
			\begin{array}{c}u_1\\v_1\\w_1 \end{array}
			\right)=\left(
			\begin{array}{c}\rho_R\\1\\\rho_E \end{array}
			\right)\sum_{j=1}^2\left[{\mathcal W}_j(t)\,e^{i\,{\bf k_j\cdot\bx}}+\overline{\mathcal W}_j(t)\,e^{-i\,{\bf k_j\cdot\bx}}\right],
			\ee
			where $\overline{\mathcal W}_j$ is the complex conjugate,
			\be\label{rhoR}
			\rho_R=\dfrac{\sqrt{\tau}}{\beta} \, \left(\sqrt{\tau} + \sqrt{\dfrac{\delta}{\tilde\Phi_0}}\right),
			\ee
			and 
			\be \label{rhoE}
			\rho_E=\dfrac{\zeta \, \theta \, \sqrt{\tau}\left(1 + 2 \, \Omega\right)}{\beta \, \, \left(\theta + \zeta \, \left(1 + \Omega\right)\right)^2} \, \left(\sqrt{\tau} + \sqrt{\dfrac{\delta}{\tilde\Phi_0}}\right).
			\ee
		\end{proposition}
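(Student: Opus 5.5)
The plan is to reduce \eqref{SistOrd1} to a finite-dimensional kernel problem by working mode by mode in Fourier space. At order $\eta$ we look for solutions built from the critical modes of the planform with $m=2$, i.e. superpositions of $e^{\pm i\,{\bf k}_j\cdot\bx}$ with $|{\bf k}_j|=k_c$. The amplitudes may depend on the slow times $T_1,T_2$, but not on $\bx$. Since $\Delta_{\bx} e^{\pm i{\bf k}_j\cdot\bx}=-k_c^2\,e^{\pm i{\bf k}_j\cdot\bx}$, applying $\mathcal L_c$ to $\mathbf{V}\,e^{\pm i{\bf k}_j\cdot\bx}$ gives $(\mathbb J-k_c^2\mathbb D_c)\mathbf V\,e^{\pm i{\bf k}_j\cdot\bx}$. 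The exponentials with distinct wave vectors are linearly independent. Hence \eqref{SistOrd1} holds if and only if every coefficient vector lies in $\ker(\mathbb J-k_c^2\mathbb D_c)$. All modes share the same $k_c^2$ and the matrices are real, so one real kernel vector serves all $j$ and the conjugate terms. This yields exactly the factorized form \eqref{Expu1}, with ${\mathcal W}_j$ the complex amplitudes.

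Next I show that this kernel is nontrivial and one-dimensional. The lower-right entry of $\mathbb J-k_c^2\mathbb D_c$ is $-(\zeta+\theta+\zeta\Omega)/(1+\Omega)\neq0$, because $E$ does not diffuse. The upper-left $2\times2$ block decouples from $w$, and its determinant is $(1+k^2\tilde\Phi_0)(\tau+k^2\delta)-\beta\xi_c\tilde\Phi_1 k^2=h(k^2)$. By the choice of $\xi_c$ in \eqref{TurCond} and $k_c$ in \eqref{CappaCri}, $h(k_c^2)=0$ is a double root: $\Lambda_c=-2\sqrt{\delta\tau\tilde\Phi_0}$. So the block is singular. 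It is not the zero matrix (its $(2,1)$ entry is $\beta\neq0$), so it has rank one, and the full kernel is one-dimensional. I normalize it by $v=1$.

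The components then follow from the linear equations. The second row, $\beta u-(\tau+\delta k_c^2)v=0$, gives $u=\rho_R=(\tau+\delta k_c^2)/\beta$. Substituting $k_c^2=\sqrt{\tau/(\delta\tilde\Phi_0)}$ gives $\delta k_c^2=\sqrt{\tau\delta/\tilde\Phi_0}$, hence $\rho_R=\frac{\sqrt\tau}{\beta}\bigl(\sqrt\tau+\sqrt{\delta/\tilde\Phi_0}\bigr)$, which is \eqref{rhoR}. The first row is automatically satisfied because $h(k_c^2)=0$; I will state this explicitly as the consistency check. The third row reads $J_{31}u-\frac{\zeta+\theta+\zeta\Omega}{1+\Omega}w=0$. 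It gives $\rho_E=\rho_R\,\frac{\zeta\theta(1+2\Omega)}{(1+\Omega)(\zeta+\theta+\zeta\Omega)}\cdot\frac{1+\Omega}{\zeta+\theta+\zeta\Omega}=\rho_R\,\frac{\zeta\theta(1+2\Omega)}{(\theta+\zeta(1+\Omega))^2}$, which matches \eqref{rhoE}.

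The only delicate point is the rank argument: one must make sure the first row adds no new constraint. This rests entirely on $h(k_c^2)=0$ at $\xi=\xi_c$. I would verify it directly by plugging $\Lambda_c$ into $h$ rather than appealing to the sign discussion. The remaining steps are routine algebra.
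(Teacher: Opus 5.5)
Your proposal is correct and follows essentially the same route as the paper: restrict to the two pairs of critical modes, reduce \eqref{SistOrd1} to $\ker(\mathbb J-k_c^2\mathbb D_c)$, show this kernel is one-dimensional, and read off $\rho_R$ and $\rho_E$ from the second and third rows using $k_c^2=\sqrt{\tau/(\delta\tilde\Phi_0)}$. The only difference is how rank two is established: the paper combines $\mathrm{Det}(\mathbb J-k_c^2\mathbb D_c)=0$ with the linear independence of the last two rows, whereas you use the block-triangular structure (a nonzero lower-right entry and a singular but nonzero upper-left $2\times2$ block); this is equivalent and has the advantage of showing explicitly that the first row imposes no extra constraint.
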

		\proof See Appendix \ref{App1}.

		Successively, we look at the equation of order $\eta^2$, i.e. \eqref{EqOrd2}. We can provide the following result.
		\begin{proposition}\label{Prop3}
			Let us consider the equation \eqref{EqOrd2} and let the vector  $\left(u_1,\\ v_1,\\ w_1\right)^T$ be in the form \eqref{Expu1}; then, the equation \eqref{EqOrd2} can be solved if and only if the conditions 
			\be\label{Solv1}
			\left(\rho_R+\sigma_C\right)\frac{\pa \mathcal W_j}{\pa T_1}=
			\xi_1\,\,k_c^2\,\tilde{\Phi}_1\,\mathcal W_j,\quad j=1,2, 
			\ee hold, and the solution has the form
			\be\label{Expu2}
			\begin{aligned}
				\left(
				\begin{array}{c}u_2 \\ v_2\\w_2 \end{array}
				\right)
				=& \left(
				\begin{array}{c}X_0 \\ Y_0\\Z_0 \end{array}
				\right)\left(|\mathcal W_1|^2+|\mathcal W_2|^2\right)+\sum_{j=1}^2 \left(
				\begin{array}{c}\rho_R \\ 1\\\rho_E \end{array}
				\right)\,\mathcal V_j\,e^{i\,\bk_j\cdot\bx}
				\\
				&+ \left(
				\begin{array}{c}X_{1} \\ Y_{1}\\Z_{1}\end{array}
				\right)\,\left[\mathcal W_1\, \mathcal W_2\,e^{i\,(\bk_1+\bk_2)\cdot\bx}
				+\mathcal W_1\,\overline{ \mathcal W}_2\,e^{i\,(\bk_1-\bk_2)\cdot\bx}\right]
				\\
				&+\sum_{j=1}^2 \left(
				\begin{array}{c}X_{2} \\ Y_{2}\\Z_{2}\end{array}
				\right)\,\mathcal W_j^2\,e^{2\,i\,\bk_j\cdot\bx}+\mbox{ c.c.},
			\end{aligned}
			\ee 
			where c.c. indicates the complex conjugate, and being the coefficients $X_m, Y_m, Z_m$, $m=0,1,2$, given by
			\be\label{CoefX0}
			\left(
			\begin{array}{c}X_0 \\ Y_0\\Z_0 \end{array}
			\right)
			=
			-2\,\rho_R^2	\left(
			\begin{array}{c}1 \\[3mm] \dfrac{\beta}{\tau} \\[3mm]\Xi_0
			\end{array}
			\right),
			\ee
			\be\label{CoefX1}
			\left(
			\begin{array}{c}X_1 \\ Y_1\\Z_1\end{array}
			\right)=
			\dfrac{ \rho_R\left( \rho_R + k_c^2 \left( \rho_R \tilde{\Phi}_0' - \xi_c  \tilde{\Phi}_1' \right) \right) }{1 + 2 k_c^2 \tilde{\Phi}_0}\left(
			\begin{array}{c}-2 \, 
				\\[4mm]\dfrac{\beta }{k_c^2 \xi_c \tilde{\Phi}_1}
				\\[4mm] \Xi_1
			\end{array}
			\right),
			\ee
			\be\label{CoefX2}
			\begin{aligned}
				\left(
				\begin{array}{c}X_2 \\ Y_2\\Z_2 \end{array}
				\right)
				=& 	\dfrac{-\rho_R\left(\rho_R + 2 \, k_c^2 \, \left(\rho_R \, \tilde{\Phi}_0' - \xi_c \, \tilde{\Phi}_1'\right)\right)}{\tau + 4 \, k_c^2 \, \left(\delta +\tilde{\Phi}_0\,\left( 4 \, k_c^2 \, \delta  + \tau \right) - \beta \, \xi_c \, \tilde{\Phi}_1\right)}
				\,\left(
				\begin{array}{c}
					4 \, k_c^2 \, \delta + \tau
					\\[2mm]
					\beta
					\\[2mm]
					\Xi_2
				\end{array}
				\right),
			\end{aligned}
			\ee
			with $\Xi_0$, $\Xi_1$ and $\Xi_2$ provided in \ref{App2}.
		\end{proposition}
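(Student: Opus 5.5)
Substituting \eqref{Expu1} into \eqref{EqOrd2} makes every term a finite combination of Fourier modes: the constant mode, the critical modes $e^{\pm i\bk_j\cdot\bx}$, the mixed modes $e^{\pm i(\bk_1\pm\bk_2)\cdot\bx}$, and the harmonics $e^{\pm 2i\bk_j\cdot\bx}$. Since $\mathcal L_c$ acts on $e^{i\bk\cdot\bx}$ as multiplication by the matrix $\mathbb J-|\bk|^2\mathbb D_c$, I will project \eqref{EqOrd2} onto each mode and solve a $3\times 3$ linear system for each one. The solvability conditions \eqref{Solv1} come from the critical modes, where the matrix is singular. The coefficients \eqref{CoefX0}--\eqref{CoefX2} come from the non-critical modes, where the matrix is invertible.

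\textbf{Critical modes.} On $e^{i\bk_j\cdot\bx}$ with $|\bk_j|=k_c$, the quadratic terms of $\mathcal H_2$ give no contribution, since a product of two first-order modes never yields a single critical mode. What remains on the right is the term $-\xi_1 d_{12}\Delta_{\bx} v_1$ with $d_{12}=\tilde\Phi_1$. This produces $\xi_1 k_c^2\tilde\Phi_1\mathcal W_j$ in the first component, while the left side is $(\rho_R,1,\rho_E)^T\,\pa_{T_1}\mathcal W_j$. The equation then reads $M_c\,(\cdots)=\text{rhs}$ with $M_c=\mathbb J-k_c^2\mathbb D_c$. By the Fredholm alternative it is solvable if and only if the rhs is orthogonal to the left kernel vector $\boldsymbol\psi$ of $M_c$, i.e.\ $\boldsymbol\psi^T M_c=0$.

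Because the third column of $M_c$ is $(0,0,-(\zeta+\theta+\zeta\Omega)/(1+\Omega))^T$, the third entry of $\boldsymbol\psi$ must vanish. The first two columns give $\boldsymbol\psi=(1,\sigma_C,0)$ with $\sigma_C=(1+k_c^2\tilde\Phi_0)/\beta$; equivalently $\sigma_C=\xi_c\tilde\Phi_1 k_c^2/(\tau+\delta k_c^2)$, and the two expressions agree by $h(k_c^2)=0$. The orthogonality condition is then $(\rho_R+\sigma_C)\pa_{T_1}\mathcal W_j=\xi_1k_c^2\tilde\Phi_1\mathcal W_j$, which is \eqref{Solv1}. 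Once this holds, the particular solution on the critical mode is determined only up to the kernel vector $(\rho_R,1,\rho_E)^T$ of Proposition \ref{Prop2}. That freedom is the $\mathcal V_j$ term in \eqref{Expu2}.

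\textbf{Non-critical modes.} For the remaining modes I compute the forcing from $\mathcal H_2$, using
\[
\nabla\cdot(e^{i\bk\cdot\bx}\nabla e^{i\bk'\cdot\bx})=-(\bk+\bk')\cdot\bk'\,e^{i(\bk+\bk')\cdot\bx}.
\]
\begin{itemize}
\item \emph{Constant mode} ($\bk=0$). The divergence terms vanish and only the reaction terms survive: $-2\rho_R^2(|\mathcal W_1|^2+|\mathcal W_2|^2)$ in the first component, and the corresponding $w$-terms in the third. Inverting $\mathbb J$ gives $X_0=-2\rho_R^2$, $Y_0=\beta X_0/\tau$, and $Z_0$ from the third row, which defines $\Xi_0$.
\item \emph{Harmonic $2\bk_j$}. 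Here $|\bk|^2=4k_c^2$. The forcing in the first component is $-\rho_R\bigl(\rho_R+2k_c^2(\rho_R\tilde\Phi_0'-\xi_c\tilde\Phi_1')\bigr)\mathcal W_j^2$. I solve the $u$--$v$ block by Cramer's rule; its determinant is $h(4k_c^2)$, which matches the denominator in \eqref{CoefX2}. The third row then gives $\Xi_2$.
\item \emph{Mixed modes $\bk_1\pm\bk_2$}. The coefficient \eqref{CoefX1} has denominator $1+2k_c^2\tilde\Phi_0$, i.e.\ $|\bk_1\pm\bk_2|^2=2k_c^2$. So I will take $\bk_1\perp\bk_2$, the square planform implicit in choosing $m=2$. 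With this choice $\bk_j\cdot\bk_l=0$ for $j\neq l$, which simplifies the divergence terms.
\end{itemize}

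\textbf{Main obstacle.} I expect the mixed-mode step to be the hardest, specifically reducing the resulting $2\times2$ determinant $h(2k_c^2)$ and the numerators to the compact forms in \eqref{CoefX1}. This is where $k_c^4=\tau/(\delta\tilde\Phi_0)$ and $\Lambda_c=-2\sqrt{\delta\tau\tilde\Phi_0}$ must be used repeatedly. I will first verify $Y_1/X_1=-\beta/(2k_c^2\xi_c\tilde\Phi_1)$ directly from the $v$-row, $(\tau+2k_c^2\delta)Y_1=\beta X_1$, and then simplify the $u$-row. Finally, collecting all modes together with their complex conjugates gives \eqref{Expu2}.
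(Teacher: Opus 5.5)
Your route is the same as the paper's. You project \eqref{EqOrd2} onto the Fourier modes, impose orthogonality to the adjoint kernel vector $(1,\sigma_C,0)$ on $e^{i\bk_j\cdot\bx}$ to get \eqref{Solv1}, and invert $\mathbb J-|\bk|^2\mathbb D_c$ on $|\bk|^2=0,\,2k_c^2,\,4k_c^2$. Your critical-mode argument, including $\sigma_C=(1+k_c^2\tilde\Phi_0)/\beta$, agrees with Appendix~\ref{App2}. So do the $u$ and $v$ entries you obtain for the constant mode and for $2\bk_j$. One small point: your remark that quadratic products never hit a critical mode depends on the choice $\bk_1\perp\bk_2$, which you only make later. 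At angles $\pi/3$ or $2\pi/3$, one of $\bk_1\pm\bk_2$ lies on $|\bk|=k_c$, and then \eqref{Solv1} would no longer be sufficient for solvability.

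The mixed-mode step, however, cannot be carried out as you plan. The $v$-row you quote, $(\tau+2k_c^2\delta)Y_1=\beta X_1$, gives $Y_1/X_1=\beta/(\tau+2k_c^2\delta)>0$. By contrast, \eqref{CoefX1} gives $Y_1/X_1=-\beta/(2k_c^2\xi_c\tilde\Phi_1)=-\beta^2/\bigl(2k_c^2(\sqrt{\delta}+\sqrt{\tau\tilde\Phi_0})^2\bigr)<0$. These are incompatible for every admissible parameter set, so there is nothing to verify.

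Run Cramer's rule exactly as you did for $2\bk_j$, using $h(K)=\delta\tilde\Phi_0(K-k_c^2)^2$ at $\xi=\xi_c$, so that $h(2k_c^2)=\tau$ and $h(4k_c^2)=9\tau$. You then obtain
\[
\begin{pmatrix}X_1\\ Y_1\end{pmatrix}=-\frac{2\rho_R\bigl(\rho_R+k_c^2(\rho_R\tilde\Phi_0'-\xi_c\tilde\Phi_1')\bigr)}{\tau}\begin{pmatrix}\tau+2k_c^2\delta\\ \beta\end{pmatrix},
\]
which is the exact analogue of \eqref{CoefX2}. Since $(\tau+2k_c^2\delta)(1+2k_c^2\tilde\Phi_0)>\tau$, even $X_1$ differs from \eqref{CoefX1} whenever the common prefactor is nonzero. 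No substitution of $k_c^4=\tau/(\delta\tilde\Phi_0)$ or $\Lambda_c=-2\sqrt{\delta\tau\tilde\Phi_0}$ will close this gap.

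The paper's proof offers nothing more at this point; it only asserts that inverting $\mathbb J-2k_c^2\mathbb D_c$ yields \eqref{CoefX1}. The defect therefore lies in the printed formula, not in your method. What your argument actually establishes is \eqref{Solv1}, the form \eqref{Expu2}, and the $u$, $v$ entries of \eqref{CoefX0} and \eqref{CoefX2}. Formula \eqref{CoefX1} must be replaced by the display above, and $Z_1$ (hence $\Xi_1$) recomputed from the third row of $\mathbb J$ with the corrected $X_1$. You should state this explicitly rather than promise a reduction to the printed form.
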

		\proof See Appendix \ref{App2}.
		
		For what concerns the equation \eqref{EqOrd3} of order $\eta^3$, it leads to the following result.
		\begin{proposition}\label{Prop4}
			Let us consider the equation \eqref{EqOrd3} and let the vectors\\  $\left(u_1,\\ v_1,\\ w_1\right)^T$, $\left(u_2,\\ v_2,\\ w_2\right)^T$, be in the form \eqref{Expu1} and \eqref{Expu2}, respectively; then, the equation \eqref{EqOrd3} can be solved if and only if the conditions below are satisfied.
			\be\label{Solv2}
			(\rho_R + \sigma_C)\left(\frac{\partial \mathcal V_j}{\partial T_1} + \frac{\partial \mathcal W_j}{\partial T_2}\right) =\, 
			k_c^2\,\tilde{\Phi}_1\,\left (\xi_2 \mathcal W_j + \xi_1\,\mathcal V_j\right)
			+\left(r_1\,|\mathcal W_j|^2 + r_2\,|\mathcal W_l|^2 \right)\,\mathcal W_j,
			\ee
			for $j,l=1,2$, $j\neq l$, with coefficients
			\be\label{r1r2}
			\begin{aligned}
				r_1=&-(X_0 + X_2)\, \rho_R \, \left(2 + k_c^2\, \tilde{\Phi}_0'\right)\\[2mm]&
				+ k_c^2\, ( X_0-X_2)\, \xi_c\, \tilde{\Phi}_1'
				+ \frac{1}{2}\, k_c^2\, \rho_R\, \left( \xi_c\, \left(4\, Y_2\, \tilde{\Phi}_1' + \rho_R\, \tilde{\Phi}_{1}''\right)-\rho_R^2\, \tilde{\Phi}_{0}''\right)
				,\\[2mm]
				r_2=&-(2\, X_1 + X_0)\, \rho_R\, \left(2 + k_c^2\, \tilde{\Phi}_0'\right)
				+ k_c^2\, \xi_c\, \left(X_0 + 2\, X_1\, \rho_R\right)\, \tilde{\Phi}_1'
				\\[2mm]&- k_c^2\, \rho_R^2\, \left(\rho_R\, \tilde{\Phi}_{0s} - \xi_c\, \tilde{\Phi}_{1s}\right).
			\end{aligned}
			\ee
		\end{proposition}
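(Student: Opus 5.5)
The proof is a Fredholm-alternative argument applied to the order-$\eta^3$ equation \eqref{EqOrd3}, following the same steps used for Proposition \ref{Prop3}. Write \eqref{EqOrd3} as $\mathcal L_c(u_3,v_3,w_3)^T = \mathbf F$, where
\[
\mathbf F=\partial_{T_1}(u_2,v_2,w_2)^T+\partial_{T_2}(u_1,v_1,w_1)^T-\mathcal H_3 .
\]
With zero-flux boundary conditions, $\mathcal L_c$ acts diagonally on Fourier modes, and on the mode $e^{i\bk\cdot\bx}$ it reduces to the matrix $\mathbb J-|\bk|^2\mathbb D_c$. For $|\bk|\neq k_c$ this matrix is invertible, because $h(|\bk|^2)>0$ and $\lambda_k^0<0$. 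For $|\bk|=k_c$ it has a one-dimensional kernel spanned by $(\rho_R,1,\rho_E)^T$ (Proposition \ref{Prop2}). So the equation is solvable if and only if, for each $j=1,2$, the component of $\mathbf F$ along $e^{i\bk_j\cdot\bx}$ is orthogonal to a left null vector $\boldsymbol\psi$ of $\mathbb J-k_c^2\mathbb D_c$.

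Since the third column of $\mathbb J-k_c^2\mathbb D_c$ has only the entry $-(\zeta+\theta+\zeta\Omega)/(1+\Omega)\neq0$, the third component of $\boldsymbol\psi$ vanishes. Hence $\boldsymbol\psi=(1,\sigma_C,0)^T$, with $\sigma_C$ determined from the first column: $-1-k_c^2\tilde\Phi_0+\beta\sigma_C=0$. This is the same $\sigma_C$ that appears in \eqref{Solv1}. A consequence is that the $w$-equation never enters the solvability condition: only the first and second components of $\mathcal H_3$ matter, and the second component is zero. The left-hand side therefore projects to $(\rho_R+\sigma_C)(\partial_{T_1}\mathcal V_j+\partial_{T_2}\mathcal W_j)$, using the $e^{i\bk_j\cdot\bx}$ parts of \eqref{Expu1} and \eqref{Expu2}. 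The $\partial_{T_1}$ derivatives of the other harmonics of $u_2$ do not contribute at this wavevector.

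The linear parameter terms are read off directly. The term $-\xi_2 d_{12}\Delta v_1-\xi_1 d_{12}\Delta v_2$, where $d_{12}=-\tilde\Phi_1$ is the $(1,2)$ entry of $\mathbb D$ scaled by $\xi$, projects to $k_c^2\tilde\Phi_1(\xi_2\mathcal W_j+\xi_1\mathcal V_j)$, with the sign convention consistent with \eqref{Solv1}. The term $-\xi_1\tilde\Phi_1'\nabla\cdot(u_1\nabla v_1)$ is quadratic in $\mathcal W$, so it produces only modes $0,\pm2\bk_j,\pm\bk_1\pm\bk_2$ and drops out of the resonant projection.

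The cubic part is the main obstacle, and it is pure bookkeeping. I would enumerate every triad $\mathbf q_1+\mathbf q_2=\bk_j$ that arises in products $u_1u_2$, $\nabla\cdot(u_1\nabla u_2+u_2\nabla u_1)$ and $\nabla\cdot(u_1\nabla v_2+u_2\nabla v_1)$, as well as the triple products in $\nabla\cdot(u_1^2\nabla u_1)$ and $\nabla\cdot(u_1^2\nabla v_1)$. The relevant combinations are:
\begin{itemize}
\item the zero mode $(X_0,Y_0)|\mathcal W|^2$ times $\mathcal W_j e^{i\bk_j\cdot\bx}$;
\item the $2\bk_j$ mode $(X_2,Y_2)\mathcal W_j^2$ times $\overline{\mathcal W}_j e^{-i\bk_j\cdot\bx}$;
\item the $\bk_j\pm\bk_l$ modes $(X_1,Y_1)$ times $\mathcal W_l$ or $\overline{\mathcal W}_l$;
\item the cubic triads $\bk_j+\bk_j-\bk_j$ and $\bk_j+\bk_l-\bk_l$.
\end{itemize}
For each one, the divergence $\nabla\cdot(f\nabla g)$ with $f\propto e^{i\mathbf p\cdot\bx}$ and $g\propto e^{i\mathbf q\cdot\bx}$ evaluates to $-(\mathbf p+\mathbf q)\cdot\mathbf q\, fg$. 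This requires the inner products $\bk_j\cdot\bk_l$ for the chosen planform (I take $\bk_1\perp\bk_2$ for $m=2$), which yield $\pm k_c^2$ factors or zero. The $-u_1u_2$ term together with $-u^2$ coming from the reaction part produces the factor $2$ in $(2+k_c^2\tilde\Phi_0')$. Counting the multiplicities (factors $3$ in cubic self-interaction and $2$ in cross-interaction), and collecting the $|\mathcal W_j|^2\mathcal W_j$ and $|\mathcal W_l|^2\mathcal W_j$ coefficients, gives $r_1$ and $r_2$ in \eqref{r1r2}. Here $\tilde\Phi_{0s},\tilde\Phi_{1s}$ denote the aggregated second-derivative combinations. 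The step where I expect errors is exactly this: matching each triad to the correct geometric factor, and making sure the $\mathcal V_j$ contributions do not generate spurious resonances.
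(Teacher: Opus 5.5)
Your proposal follows the paper's proof. You write the $O(\eta^3)$ problem as $\mathcal L_c(u_3,v_3,w_3)^T=\mathbf F$, extract the $e^{i\bk_j\cdot\bx}$-coefficient of $\mathbf F$ using \eqref{Expu1}--\eqref{Expu2}, and impose orthogonality to $(1,\sigma_C,0)^T$; your remark that the vanishing third component of this adjoint null vector makes the $w$-row irrelevant is exactly why the paper's coefficients $t_1,t_2$ drop out of \eqref{Solv2}. Three small corrections to the bookkeeping: $d_{12}$ must be $+\tilde\Phi_1$, since the linearized chemotactic term is $-\xi\tilde\Phi_1\Delta_{\bx}v$ and only this sign reproduces \eqref{Solv1}; the $2$ in $(2+k_c^2\tilde\Phi_0')$ comes entirely from the reaction term $-u^2$, whose $O(\eta^3)$ part is $-2u_1u_2$ (the $-u_1u_2$ printed in $\mathcal H_3$ is a misprint); and your computation will give the $\tilde\Phi_1'$ cross term in $r_2$ as $k_c^2\xi_c\tilde\Phi_1'(X_0+2\rho_R Y_1)$ rather than the printed $2X_1\rho_R$, because that bilinear term pairs $u$-coefficients with $v$-coefficients, while $\tilde\Phi_{0s},\tilde\Phi_{1s}$ are simply $\tilde\Phi_0'',\tilde\Phi_1''$.
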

		\proof See Appendix \ref{App3}.
		
		Now, we look at the expansion \eqref{UExp} and, keeping in mind the expressions \eqref{Expu1} and \eqref{Expu2}, we may assert that the emerging solution given in \eqref{SolAj} (with $m=2$) can be recovered by defining the amplitudes of patterns associated with modes ${\bf k}_j$ as \be\label{ExpAj}
		{\bf A}_j=\eta\, \left(
		\begin{array}{c}\rho_R \\ 1\\\rho_E\end{array}
		\right)\,\mathcal W_j+\eta^2\, \left(
		\begin{array}{c}\rho_R \\ 1\\\rho_E\end{array}
		\right)\,\mathcal V_j+O(\eta^3),\quad j=1,2,
		\ee
		{and their evolution in time, considering the expansion \eqref{timeExp}, is given by
			\be\label{dAjdt}
			\frac{d {\bf A}_j}{d t}=\eta^2\, \left(
			\begin{array}{c}\rho_R \\ 1\\\rho_E\end{array}
			\right)\,\frac{\pa\mathcal W_j}{{\pa T_1}}+\eta^3\, \left(
			\begin{array}{c}\rho_R \\ 1\\\rho_E\end{array}
			\right)\,\left(\frac{\pa\mathcal W_j}{{\pa T_2}}+\frac{\pa\mathcal V_j}{{\pa T_1}}\right)+O(\eta^4),\quad j=1,2,
			\ee}
		We also find it convenient to write amplitudes as ${\bf A}_j=\left(\rho_R\,A_j,\,A_j,\rho_E\,A_j\right)^T$. At this point, we outline the result for the evolution in time of amplitudes.
		\begin{proposition}\label{Prop5}
			Let us decompose each $A_j(t)$ into its mode and phase angle as $A_j(t)=\rho_j(t)\,e^{i\,\phi_j(t)}$, for $j=1,2$. Then, $\phi_1$ and $\phi_2$ are constant and the evolution in time of $\rho_1,\,\rho_2$ is prescribed by the system
			\be\label{SistRhoPhi}
			\begin{aligned}		
				s_0\,\frac{d \rho_1}{d t} &= \xi_m\,\rho_1 +s_1 \, \rho_1^3 +s_2\,\rho_2^2\, \rho_1
				\\
				s_0\,\frac{d \rho_2}{d t} &= \xi_m \,\rho_2 +s_1 \, \rho_2^3 +s_2\,\rho_1^2  \, \rho_2,
			\end{aligned}
			\ee
			where 
			\ben s_0=\dfrac{\rho_R+\sigma_C}{k_c^2\,\xi_c\,\tilde{\Phi}_1},\quad \xi_m=\dfrac{\xi-\xi_c}{\xi_c},\quad s_i=\dfrac{r_i}{k_c^2\,\xi_c\,\tilde{\Phi}_1}, \ee
			with $i=1,\,2$.
		\end{proposition}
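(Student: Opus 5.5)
The plan is to reconstruct the amplitude equation for $A_j$ from the solvability conditions of Propositions \ref{Prop3} and \ref{Prop4}, and then split it into modulus and phase.

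\textbf{Step 1: the amplitude equation.} By \eqref{ExpAj} the scalar amplitude is $A_j=\eta\,\mathcal W_j+\eta^2\mathcal V_j+O(\eta^3)$. By \eqref{dAjdt},
\[
\frac{dA_j}{dt}=\eta^2\,\frac{\pa \mathcal W_j}{\pa T_1}+\eta^3\left(\frac{\pa\mathcal W_j}{\pa T_2}+\frac{\pa \mathcal V_j}{\pa T_1}\right)+O(\eta^4).
\]
Multiply this by $(\rho_R+\sigma_C)$. Substitute \eqref{Solv1} at order $\eta^2$ and \eqref{Solv2} at order $\eta^3$. Collecting terms gives
\[
(\rho_R+\sigma_C)\frac{dA_j}{dt}=k_c^2\tilde\Phi_1\Big[\eta\xi_1\big(\eta\mathcal W_j+\eta^2\mathcal V_j\big)+\eta^2\xi_2\,\eta\mathcal W_j\Big]+\eta^3\big(r_1|\mathcal W_j|^2+r_2|\mathcal W_l|^2\big)\mathcal W_j .
\]
Up to $O(\eta^4)$, the bracket equals $(\eta\xi_1+\eta^2\xi_2)A_j$. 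By \eqref{expxi} this is $(\xi-\xi_c)A_j$. The cubic term equals $(r_1|A_j|^2+r_2|A_l|^2)A_j$ to the same order. Dividing by $k_c^2\xi_c\tilde\Phi_1$ gives the Landau system
\[
s_0\frac{dA_j}{dt}=\xi_m A_j+\big(s_1|A_j|^2+s_2|A_l|^2\big)A_j,\qquad j\neq l.
\]
The main care point is the bookkeeping of orders. One must check that the cross term $\eta^3\xi_1\mathcal V_j$ is exactly absorbed into $(\xi-\xi_c)A_j$, and that the discarded remainders are $O(\eta^4)$.

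\textbf{Step 2: the polar decomposition.} Substitute $A_j=\rho_j e^{i\phi_j}$. The left side becomes $s_0(\dot\rho_j+i\rho_j\dot\phi_j)e^{i\phi_j}$. Divide by $e^{i\phi_j}$ and separate real and imaginary parts.

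The real part is exactly \eqref{SistRhoPhi}. This uses $|A_j|=\rho_j$.

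The imaginary part gives $s_0\rho_j\dot\phi_j=0$. This holds because $s_0,\xi_m,s_1,s_2$ are real: the model parameters are real, so $\rho_R,\sigma_C,X_m,Y_m,r_i$ are real as well. Hence $\dot\phi_j=0$ wherever $\rho_j\neq0$, and $\phi_j$ is constant. If $\rho_j\equiv0$ the phase is irrelevant and may be taken constant.

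\textbf{Expected obstacle.} The realness of $r_1,r_2$ must be confirmed from \eqref{r1r2}. These coefficients are built from real quantities, so this should follow directly. A second point to note is that the planform with $m=2$ has no quadratic resonant terms, unlike hexagons. Without such terms the phases decouple, which is what makes the constancy of $\phi_j$ hold.
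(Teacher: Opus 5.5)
Your proposal is correct and follows essentially the paper's argument: substitute \eqref{Solv1}--\eqref{Solv2} into \eqref{dAjdt}, absorb $\eta\xi_1+\eta^2\xi_2$ into $\xi-\xi_c$ via \eqref{expxi}, and divide by $k_c^2\xi_c\tilde\Phi_1$ to get the Landau system \eqref{SistAj}, then split the polar form into real and imaginary parts. The differences are cosmetic: you divide by $e^{i\phi_j}$ before separating, whereas the paper works with the equivalent $\sin/\cos$ system, and your bookkeeping correctly yields $(\xi-\xi_c)(\eta\mathcal W_j+\eta^2\mathcal V_j)$ where the paper's intermediate display has a stray factor $\xi_1$ multiplying $\mathcal V_j$.
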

		\proof See Appendix \ref{App4}

		The existence and stability results relevant to the emergence of different spatial patterns are given by the following theorem.
		\begin{theorem}\label{Teo}
			The system \eqref{SistRhoPhi} admits the following stationary  states that correspond to the different spatial patterns:
			\begin{itemize}
				\item[i)]  homogeneous solution with $\rho_1=\rho_2=0$; in this case, no pattern  emerges;
				\item[ii)]  equilibria $\mathcal S_1=\left(\bar\rho,\,0\right)$ and $\mathcal S_2=\left(0,\,\bar\rho\right)$, with $\bar\rho=\sqrt{-{\xi_m}/{s_1}}$, corresponding to striped pattern, that exists for $s_1<0$ and are stable for $s_0 > 0$ and $s_2 < s_1 < 0$;
				\item[iii)]   equilibrium $\mathcal Q=\left(\tilde\rho,\,\tilde\rho\right)$, with $\tilde\rho=\sqrt{-{\xi_m}/(s_1+s_2)},$  corresponding to squared pattern, that exists for $s_1<-s_2$ and is stable for $s_0 > 0$ and  $s_1 < -|s_2|$. 
			\end{itemize}	
			
		\end{theorem}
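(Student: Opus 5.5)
The plan is to treat \eqref{SistRhoPhi} as a planar autonomous ODE system in $(\rho_1,\rho_2)$, with $\rho_j\ge 0$ since they are moduli. Proposition \ref{Prop5} already tells us the phases are constant, so the phases play no role in existence or stability and only the amplitude system needs to be studied. Throughout I take $s_0\neq 0$ and, as is implicit in the statement, $\xi>\xi_c$, i.e.\ $\xi_m>0$, which is the Turing-unstable regime of \eqref{TurCond}.

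First I find all stationary states. Setting the right-hand sides to zero and factoring out $\rho_1$ and $\rho_2$ gives
\[
\rho_1\left(\xi_m+s_1\rho_1^2+s_2\rho_2^2\right)=0,\qquad \rho_2\left(\xi_m+s_1\rho_2^2+s_2\rho_1^2\right)=0 .
\]
This leads to four cases.
\begin{enumerate*}[label=(\alph*)]
\item $\rho_1=\rho_2=0$ gives the homogeneous state i).
\item $\rho_2=0$, $\rho_1\neq0$ forces $\rho_1^2=-\xi_m/s_1$, which is real and positive exactly when $s_1<0$; this is $\mathcal S_1$, and $\mathcal S_2$ follows symmetrically.
\item $\rho_1,\rho_2\neq0$: subtracting the two brackets gives $(s_1-s_2)(\rho_1^2-\rho_2^2)=0$. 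For $s_1\neq s_2$ this forces $\rho_1=\rho_2=\tilde\rho$ with $\tilde\rho^2=-\xi_m/(s_1+s_2)$, which exists iff $s_1+s_2<0$.
\end{enumerate*}
The degenerate case $s_1=s_2$ yields a continuum of equilibria on a circle; I would remark on it but exclude it as non-generic.

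Next I study linear stability through the Jacobian
\[
\frac{1}{s_0}\begin{pmatrix}\xi_m+3s_1\rho_1^2+s_2\rho_2^2 & 2s_2\rho_1\rho_2\\ 2s_2\rho_1\rho_2 & \xi_m+3s_1\rho_2^2+s_2\rho_1^2\end{pmatrix}.
\]
\begin{enumerate*}[label=(\alph*)]
\item At $\mathcal S_1$ the matrix is diagonal, with entries $(\xi_m+3s_1\bar\rho^2)/s_0=-2\xi_m/s_0$ and $(\xi_m+s_2\bar\rho^2)/s_0=\xi_m(s_1-s_2)/(s_1 s_0)$. With $\xi_m>0$ and $s_1<0$, both are negative iff $s_0>0$ and $s_1-s_2>0$, i.e.\ $s_2<s_1<0$, as claimed.
\item At $\mathcal Q$ the matrix has the form $\frac{1}{s_0}\begin{pmatrix}a&b\\b&a\end{pmatrix}$ with $a=\xi_m+(3s_1+s_2)\tilde\rho^2$ and $b=2s_2\tilde\rho^2$. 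Its eigenvalues are $(a\pm b)/s_0$, namely $-2\xi_m/s_0$ and $2\tilde\rho^2(s_1-s_2)/s_0$, using $\xi_m=-(s_1+s_2)\tilde\rho^2$. Stability therefore needs $s_0>0$, $s_1<s_2$, and the existence condition $s_1<-s_2$; together these give $s_1<-|s_2|$.
\item At the origin the eigenvalues are both $\xi_m/s_0$, so the homogeneous state is unstable when $s_0>0$, $\xi_m>0$. The theorem makes no stability claim for it, so I would only note this.
\end{enumerate*}

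The computation is routine. The main point needing care is bookkeeping of signs: the sign of $s_0$ multiplies every eigenvalue, and the sign of $\xi_m$ has to be fixed consistently with the Turing regime. Once $\xi_m>0$ is fixed, both stability conditions come directly from the eigenvalue formulas above. Finally, I would interpret the equilibria via \eqref{SolAj} with $m=2$. At $\mathcal S_j$ a single mode is active, giving stripes. At $\mathcal Q$ both orthogonal modes have equal amplitude, giving squares, provided $\bk_1\perp\bk_2$ with $|\bk_j|=k_c$, which is the setting of Proposition \ref{Prop3}.
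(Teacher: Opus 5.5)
Your proposal is correct and follows essentially the same route as the paper: take $\xi_m>0$ from the Turing condition, linearize the amplitude system \eqref{SistRhoPhi} at $\mathcal S_{1,2}$ and $\mathcal Q$, and read off the stability conditions. The differences are only cosmetic. You compute the eigenvalues directly ($-2\xi_m/s_0$ together with $\xi_m(s_1-s_2)/(s_0 s_1)$, respectively $2\tilde\rho^2(s_1-s_2)/s_0$), whereas the paper checks the trace and determinant; you also enumerate all equilibria, including the degenerate circle when $s_1=s_2$, which the paper does not do.
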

		\begin{proof}
			The conditions for the existence of both striped and squared patterns are directly provided by observing that $\xi_m>0$ from \eqref{TurCond}.  In order to inquire about the stability of patterns, let us compute the Jacobian matrix of the system \eqref{SistRhoPhi} evaluated at the equilibria. For $\mathcal S_1$ we have
			\ben
			\boldsymbol{\mathcal{J}}|_{\mathcal S_1}=
			\begin{pmatrix}
				-\dfrac{2 \, \xi_m}{s_0} & 0 \\
				0 & \dfrac{(s_1 - s_2) \, \xi_m}{s_0 \, s_1}
			\end{pmatrix},
			\ee
			having trace and determinant  
			\ben
			Tr\,\boldsymbol{\mathcal{J}}|_{\mathcal S_1}=-\dfrac{(s_1 + s_2) \, \xi_m}{s_0 \, s_1}
			,\quad Det\,\boldsymbol{\mathcal{J}}|_{\mathcal S_1}=\dfrac{2 \, (-s_1 + s_2) \, \xi_m^2}{s_0^2 \, s_1}.
			\ee
			It is easy to check that $Tr\,\boldsymbol{\mathcal{J}}|_{\mathcal S_1}=Tr\,\boldsymbol{\mathcal{J}}|_{S_2}$ and $Det\,\boldsymbol{\mathcal{J}}|_{\mathcal S_1}=Det\,\boldsymbol{\mathcal{J}}|_{S_2}$. Then, the conditions for local asymptotic stability of the striped pattern providing $
			Tr\,\boldsymbol{\mathcal{J}}|_{\mathcal S_1}<0$ and $Det\,\boldsymbol{\mathcal{J}}|_{\mathcal S_1}>0$ are
			\be\label{StabStri}    
			s_0 > 0, \quad  s_2 < s_1 < 0.
			\ee
			
			For the stability of the squared pattern, instead, we compute the Jacobian matrix evaluated at the equilibrium $\mathcal Q$, obtaining 
			
			\ben
			\boldsymbol{\mathcal{J}}|_{\mathcal Q}=-\dfrac{2 \,  \xi_m}{s_0 \, (s_1 + s_2)}
			\begin{pmatrix}
				s_1 && s_2 \\
				\\
				s_2 && s_1
			\end{pmatrix},
			\ee
			whose trace and determinant  are
			\ben
			Tr\,\boldsymbol{\mathcal{J}}|_{\mathcal Q}=-\dfrac{4 \, s_1 \, \xi_m}{s_0 \, (s_1 + s_2)}
			,\quad Det\,\boldsymbol{\mathcal{J}}|_{\mathcal Q}=\dfrac{4 \, (s_1 - s_2) \, \xi_m^2}{s_0^2 \, (s_1 + s_2)}.
			\ee
			In this case, the conditions for local asymptotic stability of the squared pattern are
			\be\label{StabSqua}    
			s_0 > 0,\quad  s_1 < -|s_2|.
			\ee
			\begin{flushright}$\square$\end{flushright}
		\end{proof}
		
		\begin{remark}
			The spatial patterns predicted by Theorem \ref{Teo} correspond to structured distributions of demyelinated regions that may resemble histopathological observations in Multiple Sclerosis. Specifically, striped or band‑like configurations can be interpreted as elongated plaques often seen along white‑matter tracts, where demyelinated areas follow the orientation of deep medullary veins or fiber bundles, a pattern classically exemplified by “Dawson’s fingers” \cite{multz2025multiple}; additionally, evidence from \cite{peterson2001transected} shows elongated plaques even in lesions confined to the cortex. Some of these striped patterns may also represent individual segments of the concentric rings seen in Baló’s lesions \cite{stadelmann2005tissue}, while squared or spotted arrangements reflect focal or concentric demyelinating areas reminiscent of either intracortical lesions \cite{peterson2001transected} or the fully developed rings characteristic of Baló’s sclerosis \cite{tzanetakos2020heterogeneity}.\end{remark}

		\section{Numerical simulations}
		\label{sec:4}
		
		In this section, we numerically investigate the different possible behaviors of the reaction-diffusion system \eqref{eq:rdsR} by varying the parameter settings. The aim is to get patterning results in accordance with the theoretical findings outlined in the previous section. First of all, we fix the following  parameter values:
		\be
		\begin{array}{c}
			\beta = 1,\quad \tau = 5,\quad \theta = 1,\quad \Omega = 1,\quad \zeta =0.2.\label{Pars}
		\end{array}
		\ee
		Then, we analyze the dynamics of the model for different choices of the squeezing exponent $\gamma$, defining functions \eqref{FunPhi01}. In particular, we take $\gamma=1,2$, and we investigate the values for the parameters $\mathcal R$ (given in \eqref{FunPPhi01}) and $\delta$ which determine the existence and stability of patterns in the two cases. The corresponding results are reported in  Figure \ref{Bif}.
		
		\begin{remark}
			We point out here that the values for parameters relevant to cytokine dynamics and chemotactic motion are taken accordingly to the ones used in other mathematical models for Multiple Sclerosis, e.g. \cite{lombardo2017demyelination}, and are biologically plausible. Conversely, the parameter values governing the progression of myelin destruction and the range for ${R}$ are chosen strictly for illustrative purposes here, with the intention that more specific estimations will be addressed in future investigations.\end{remark}
		
		\begin{figure}[ht!]
			\centering
			\begin{tabular}{cc}
				(a)&(b)\\
				\includegraphics[width=0.35\linewidth]{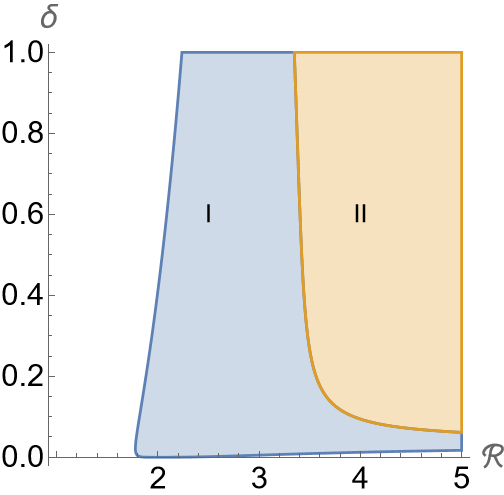}
				&
				\includegraphics[width=0.35\linewidth]{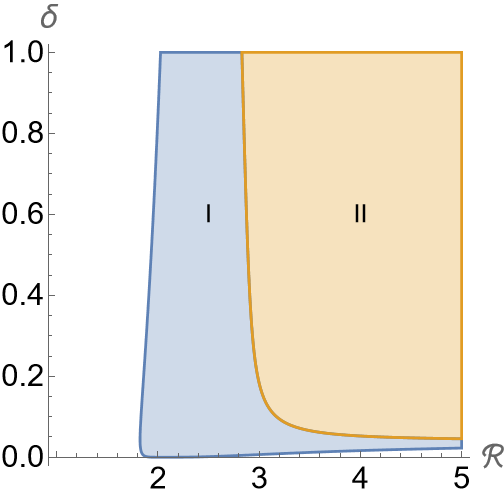}
			\end{tabular}
			\caption{Values for $\mathcal R$ and $\delta$ satisfying condition  \eqref{StabStri} (region I) or condition \eqref{StabSqua} (region II), taking values as in \eqref{Pars} and fixing $\gamma=1$ in panel (a)  and  $\gamma=2$ in panel (b).}
			\label{Bif}
		\end{figure}
		In Figure \ref{Bif}, values for $\mathcal R$ and $\delta$ leading to stable stripes as stated by condition \eqref{StabStri} are in region I, while values providing stable squares stated by condition \eqref{StabSqua}  are in region II, the two panels correspond to the cases $\gamma=1$ (panel (a)) and $\gamma=2$ (panel (b)).

		Then, we depict the behavior of quantities involved in system \eqref{eq:rdsR}-\eqref{NoFlux}. We take as reference case the one with $\gamma=1$, $\delta=0.6$ and  $\mathcal R=3.2$. From Figure \ref{Bif}, we observe that the stability of striped patterns is expected. Additionally, we take $\xi_m=0.01$ that, with this choice of parameters, provides a value for the chemotactic sensitivity  $\xi\simeq 13.31$.  We perform numerical simulations using the software \texttt{VisualPDE} \cite{walker2023visualpde} in a square domain of size $L=5\pi$. The results are shown in Figure \ref{Cases1}, where the portion of myelin destroyed is reported. 
		\begin{figure}[ht!]
			\centering
			\begin{tabular}{c}
				{\quad\includegraphics[width=0.35\textwidth]{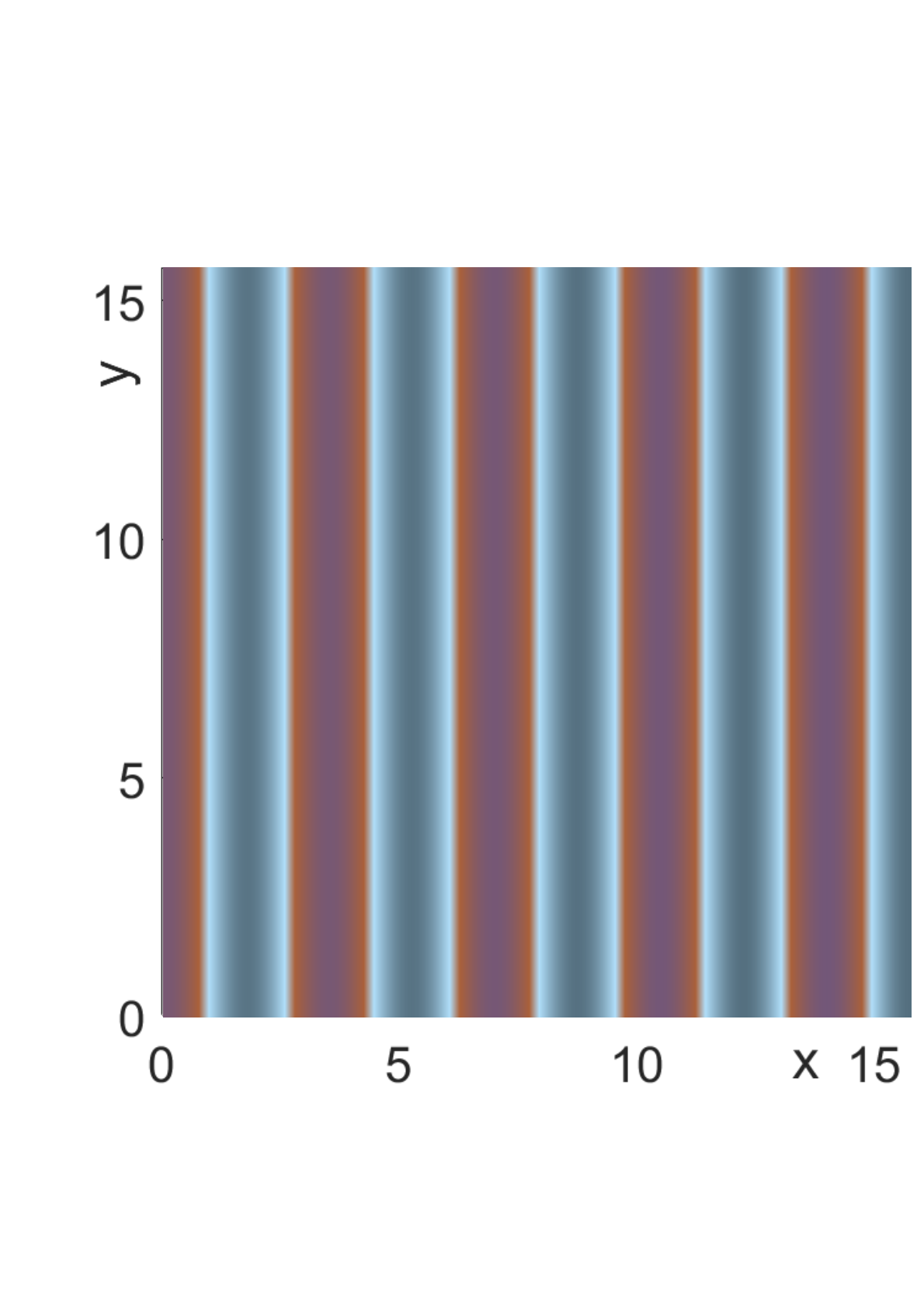}
					\qquad\includegraphics[width=0.4\textwidth]{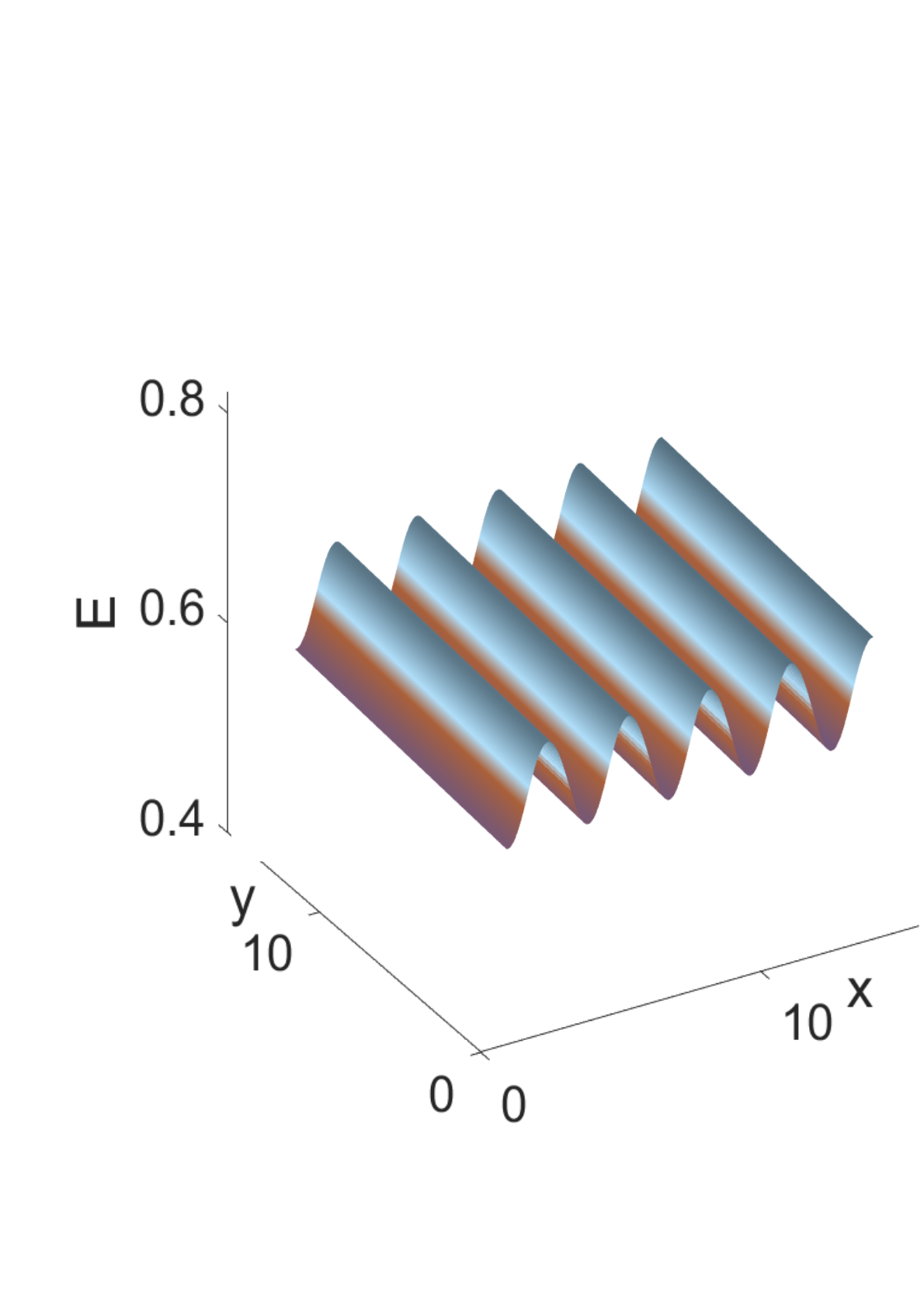}}\\
				\includegraphics[width=0.6\textwidth]{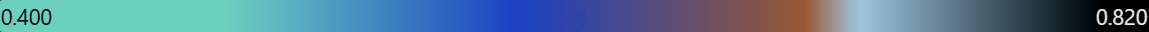}
			\end{tabular}
			\caption{ Long-time patterning of destroyed myelin portion ($E$), described by system \eqref{eq:rdsR}, taking parameters  $\gamma=1$, $\delta=0.6$ and  $\mathcal R=3.2$, $\xi=13.31$. }\label{Cases1}
		\end{figure}
		
		As a second case, we keep the same values of parameters as before, except for $\gamma=2$, providing a value for the chemotactic sensitivity  $\xi\simeq 10.88$. In this case, as observable in Figure \ref{Bif} again, a squared symmetry of the pattern is the stable configuration. The results are shown in Figure \ref{Cases2}, where we observe a spotted spatial configuration. Indeed, in terms of modeling, a higher squeezing exponent leads to a higher diffusive term and a lower chemotactic term for self-reactive immune cells, which means that the random movement of cells is prevalent.
		\begin{figure}[ht!]
			\centering
			\begin{tabular}{c}
				{\quad\includegraphics[width=0.35\textwidth]{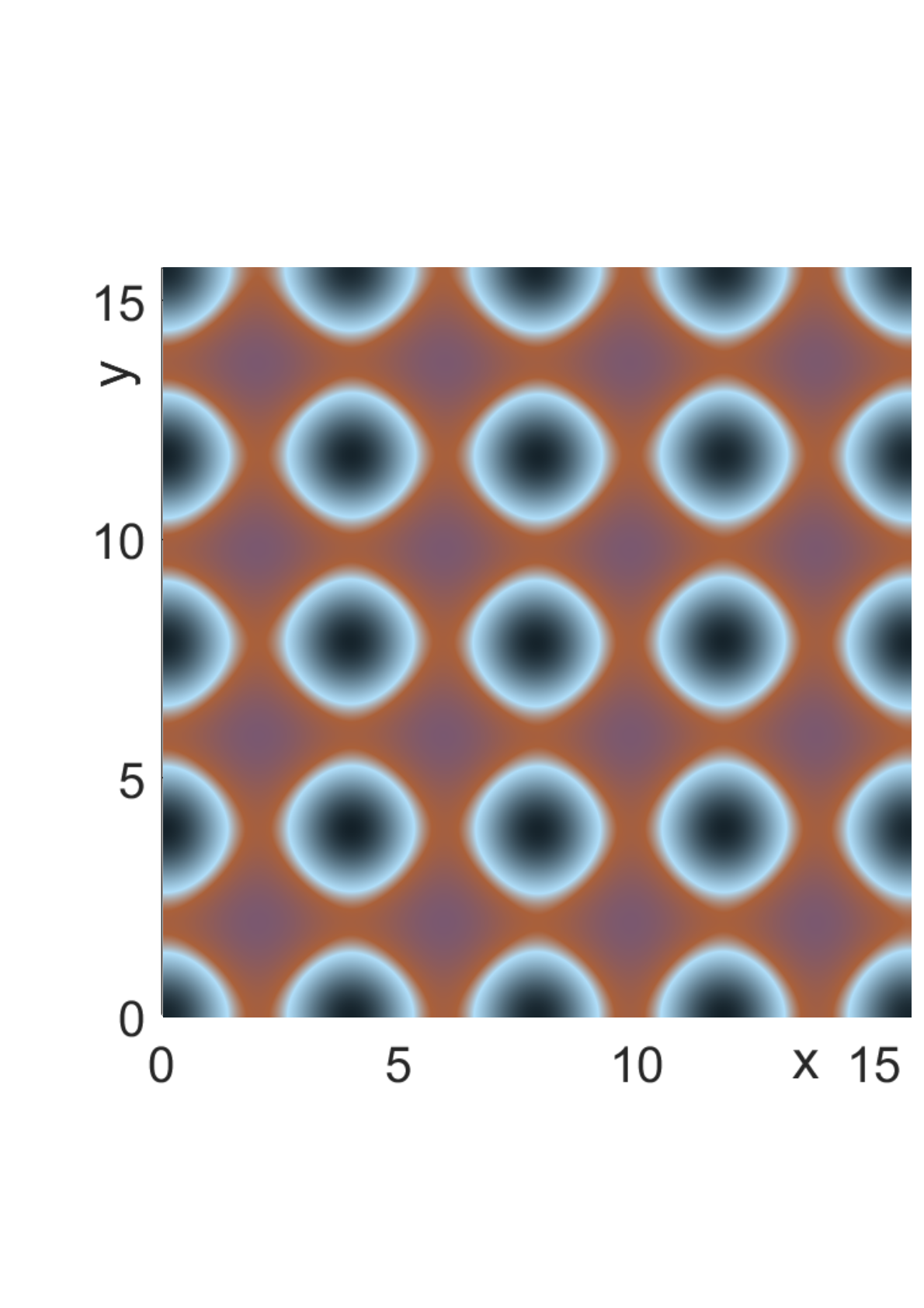}
					\qquad	\includegraphics[width=0.4\textwidth]{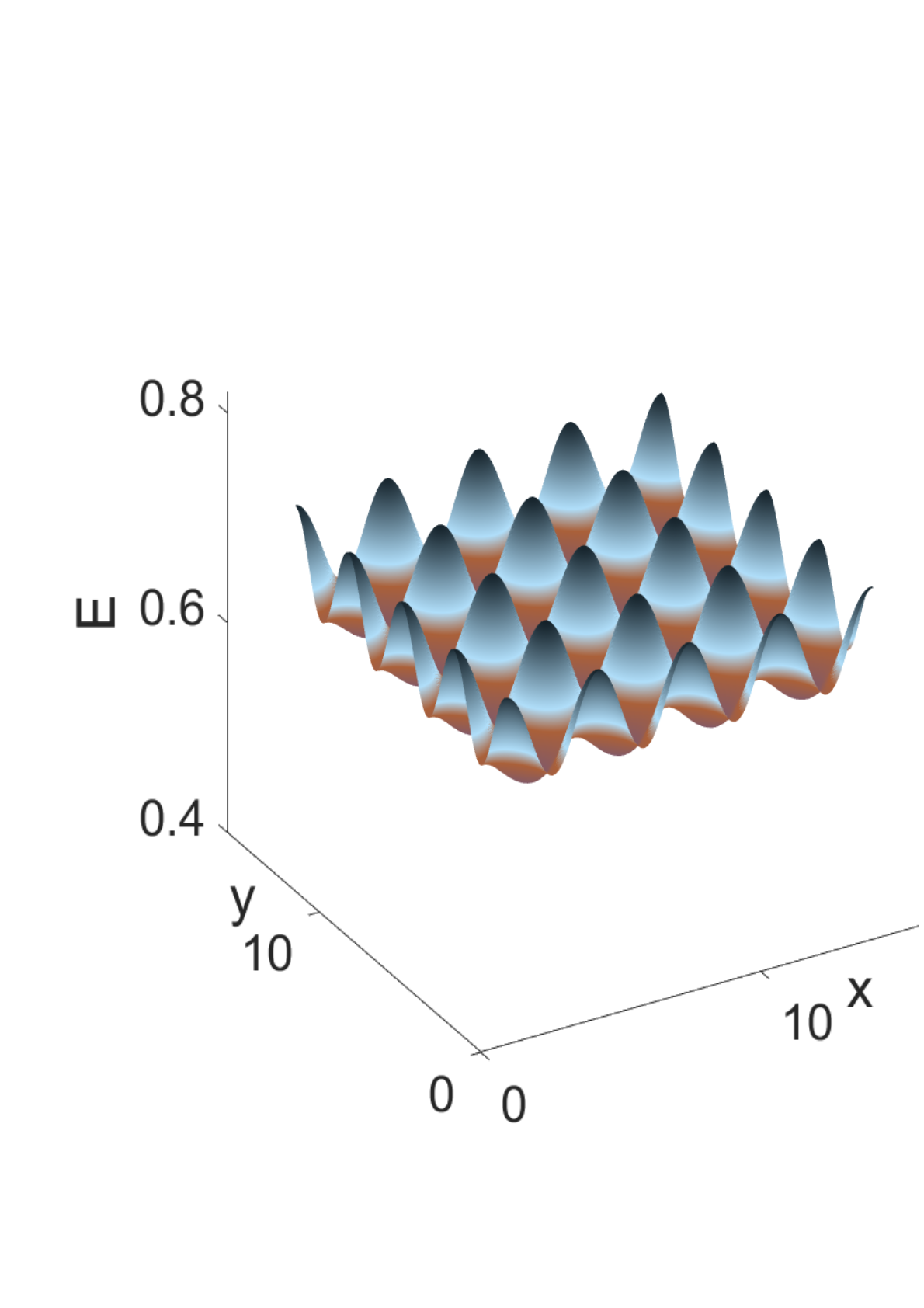}}\\
				\includegraphics[width=0.6\textwidth]{Barra.png}
			\end{tabular}
			\caption{ Long-time patterning of destroyed myelin portion ($E$), described by system \eqref{eq:rdsR}, taking parameters  $\gamma=2$, $\delta=0.6$ and  $\mathcal R=3.2$, $\xi=10.88$. }\label{Cases2}
		\end{figure}
		
		As a further test, we take the same parameters as the reference one, but we increase the parameter $\xi_m=0.1$; in this way, the chemotactic sensitivity results  $\xi\simeq 14.5$. The obtained configuration, as shown in Figure \ref{Cases3}, is still striped one. The observable difference is in the minimum and maximum levels of consumed myelin, which in this case are, respectively, lower and higher w.r.t. the reference case. This is a consequence of the fact that the motion of self-reactive immune cells is majorly driven by the chemotactic attraction of cytokines, causing an inflammation, and consequently a myelin injury, more concentrated in some areas of the domain.
		\begin{figure}[ht!]
			\centering
			\begin{tabular}{c}
				{\includegraphics[width=0.35\textwidth]{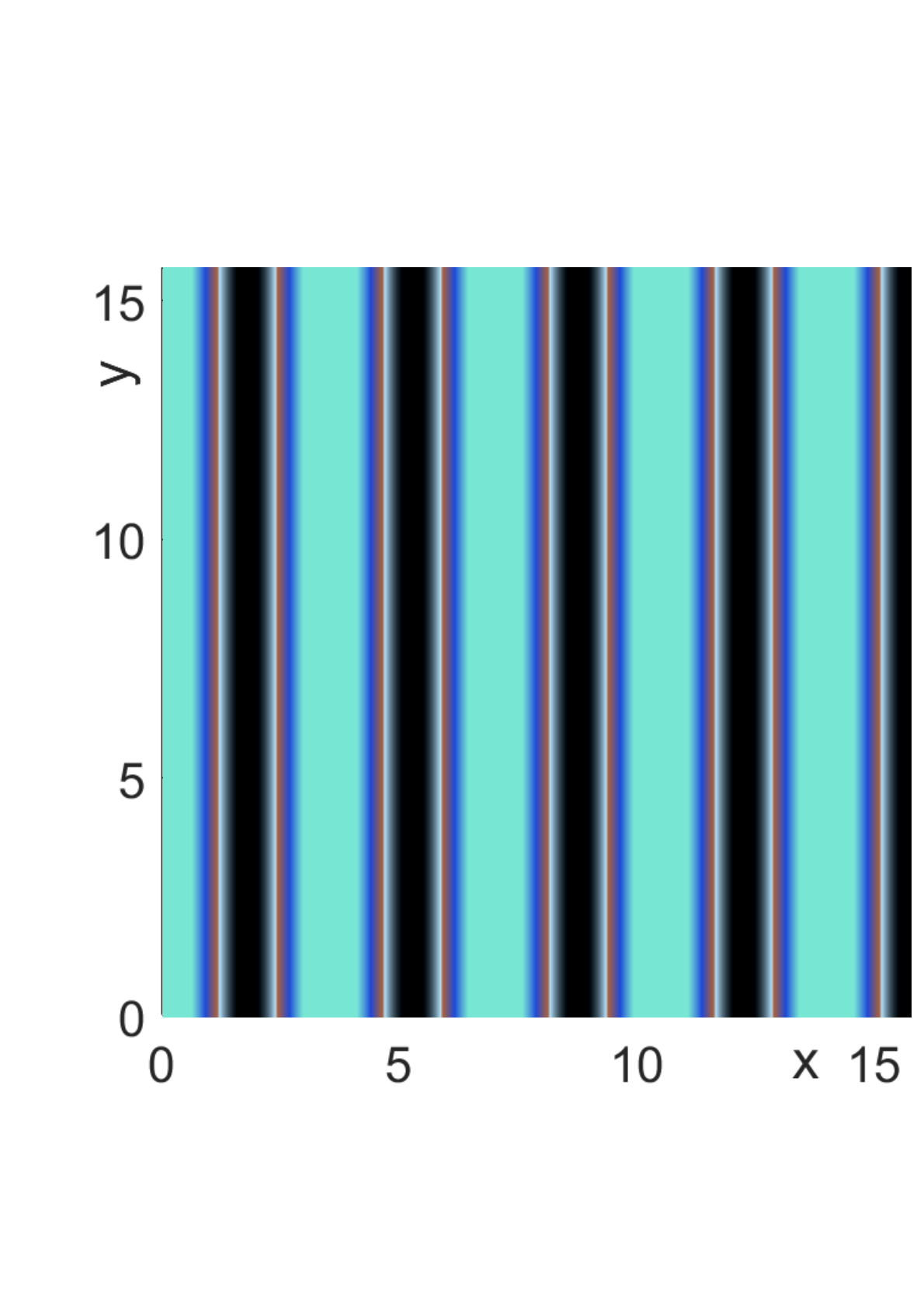}
					\qquad\includegraphics[width=0.4\textwidth]{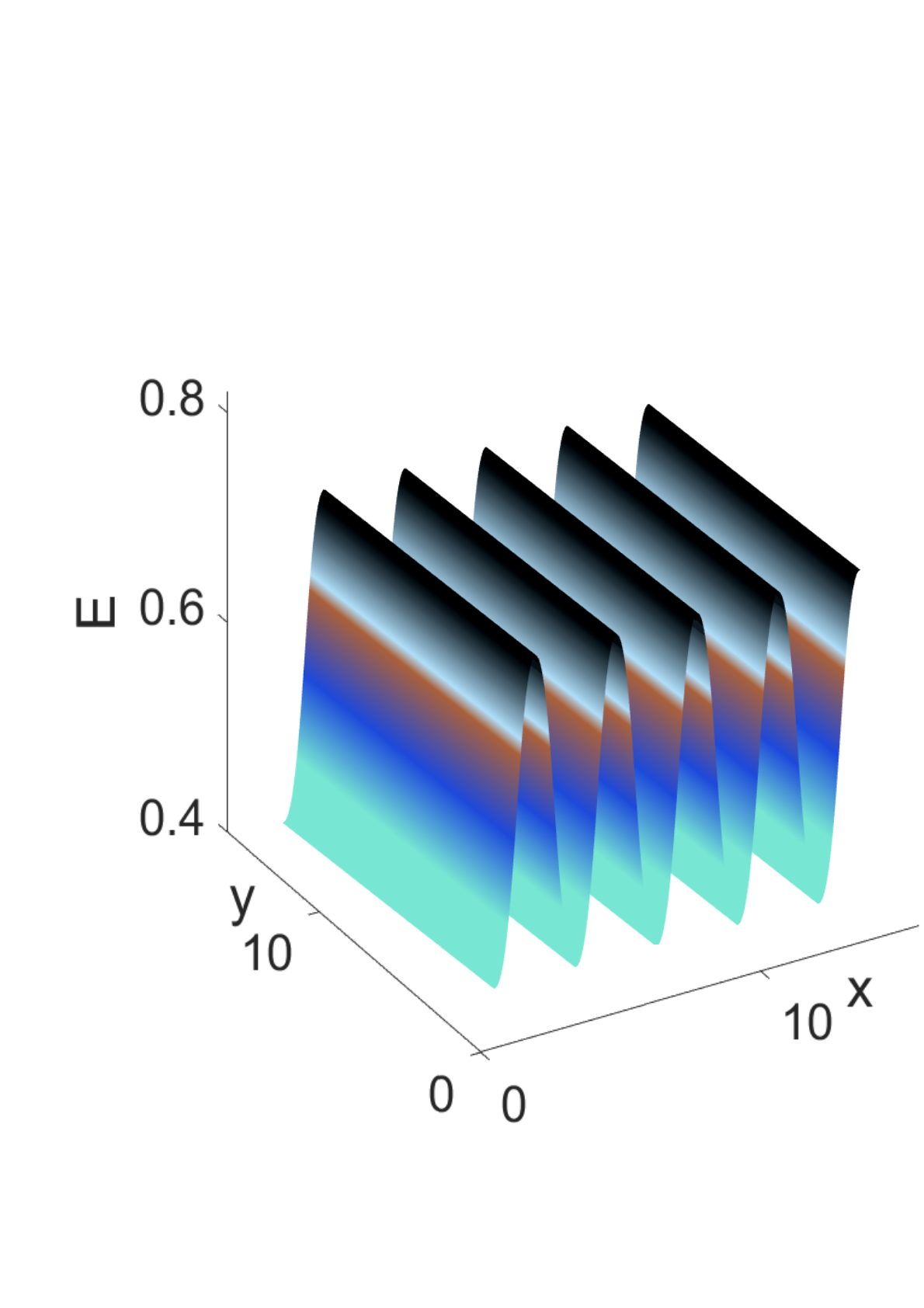}}\\
				\includegraphics[width=0.6\textwidth]{Barra.png}
			\end{tabular}
			\caption{ Long-time patterning of destroyed myelin portion ($E$), described by system \eqref{eq:rdsR} , taking parameters $\gamma=2$, $\delta=0.6$ and  $\mathcal R=3.2$,  $\xi=14.5$. }\label{Cases3}
		\end{figure}
		
		We point out that the patterns depicted above have been obtained by starting from a perturbation of the equilibrium resembling the results given in Subsection \ref{SubWeak}. Nevertheless, we underline that a weakly nonlinear analysis can be performed by decomposing the solution around the critical bifurcation value into three active dominant pairs of eigenmodes (i.e., as in the expression \eqref{SolAj}, with $m=3$), as already proposed for other models describing Multiple Sclerosis (see, e.g., \cite{bisi2025derivation}).  In this case, the solution is shaped by the combination of three amplitudes $\rho_1,\rho_2,\rho_3$ that can lead to the emergence of striped ($\rho_1\neq0,\rho_2=\rho_3=0$), hexagonal ($\rho_1=\rho_2=\rho_3\neq0$) or rectangular ($0\neq\rho_1=\rho_2\neq\rho_3\neq0$) patterns. Leaving this kind of analysis as a future step, we also perform numerical simulations for our model by starting from a random perturbation of the equilibrium. In particular, we take parameters' values $\gamma=1$, $\delta=0.6$, $\mathcal R=3.2$, and perform a test by taking $\xi=13.31$ (as in the reference case) and another taking $\xi=14.5$
		(as in the third case considered). The results obtained are reported in Figure \ref{Cases4} and show the coexistence of other stable spatial configurations, which show rectangular  (panel (a)) or hexagonal   (panel (b)) symmetry.
		\begin{figure}[ht!]
			\centering
			\begin{tabular}{cc}
				\quad(a)&\quad (b)\vspace{-0.8cm}\\	\includegraphics[width=0.3\textwidth]{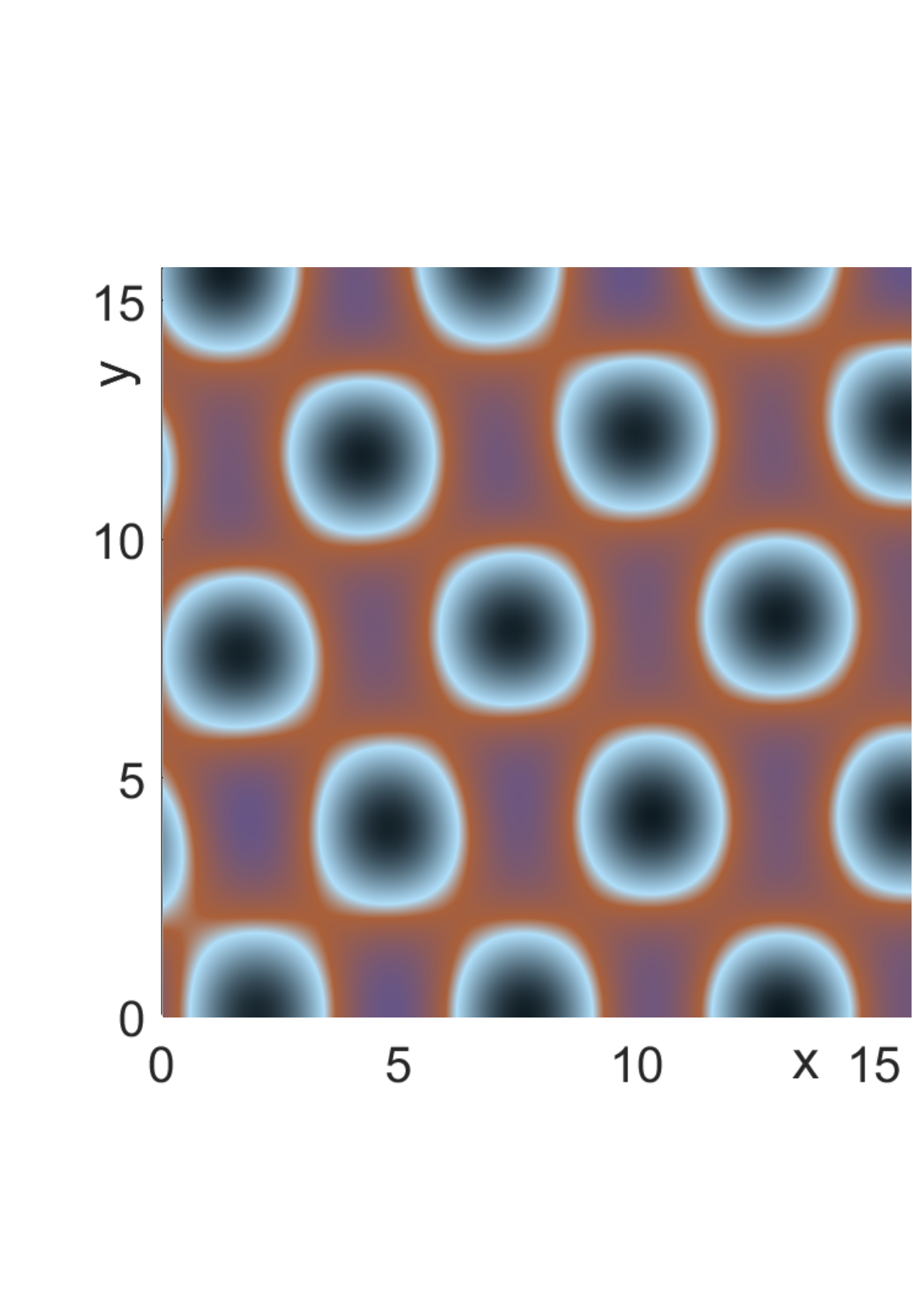}
				&	\includegraphics[width=0.3\textwidth]{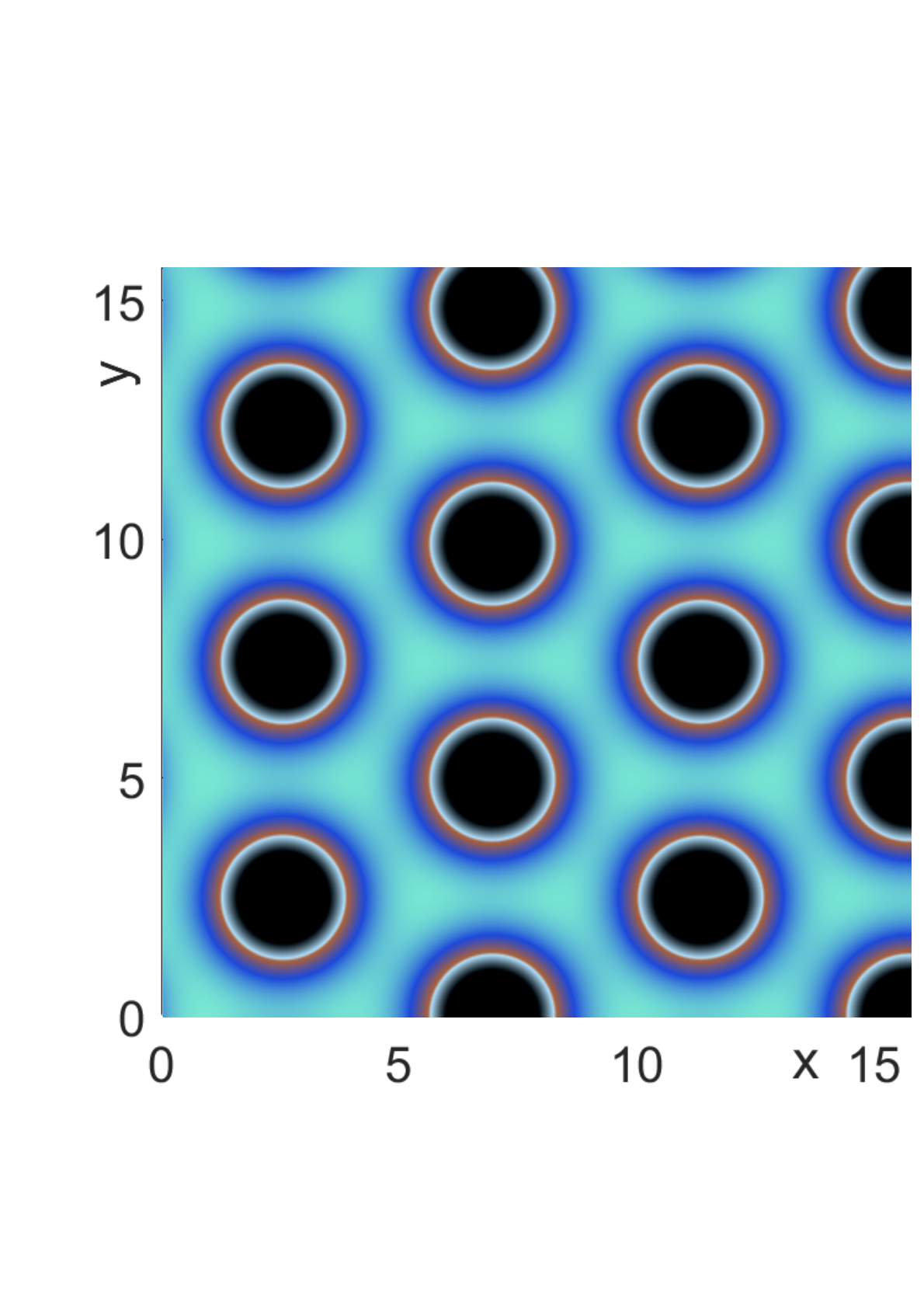}
			\end{tabular}\vspace{-0.8cm}
			\\	\qquad\includegraphics[width=0.6\textwidth]{Barra.png}
			\caption{ Long-time patterning of destroyed myelin portion ($E$), described by system \eqref{eq:rdsR} starting from a random perturbation of equilibrium, with parameters $\gamma=1$, $\delta=0.6$ and  $\mathcal R=3.2$.  Panel (a): $\xi=13.31$.   Panel (b):   $\xi=14.5$. }
			\label{Cases4}
		\end{figure}
		\section{Conclusions}
		\label{sec:5}
		In this study, we consider a reaction-diffusion model for Multiple Sclerosis and investigate the formation of two-dimensional spatial patterns that may replicate the development of demyelinating lesions in the white matter caused by the disease.
		
		To describe the dynamics of self-reactive immune cells, cytokines, and the destroyed portion of myelin, we perform a reduction of the system introduced in paper \cite{travaglini2023reaction}. We conduct a Turing instability analysis to determine parameter conditions that allow for the formation of spatial patterns. Subsequently, we carry out a weakly nonlinear analysis of the problem to further explore different patterning outcomes near the bifurcation value. Specifically, we decompose the solution in terms of two pairs of dominant eigenmodes.
		
		We then identify the conditions that may lead to the emergence of striped or squared-like patterns and analyze their stability over time. The results obtained are numerically validated, revealing that different choices of the squeezing probability, which regulates the motion of immune cells in space, can lead to different pattern shapes. Furthermore, we observe that a higher value of the chemotactic parameter results in myelin damage more concentrated in specific regions of the domain.
		
		Finally, we perform numerical simulations starting from random initial conditions, which lead to the formation of patterns exhibiting hexagonal or rectangular symmetry.  A novel contribution of this work is that, through a combination of weakly nonlinear analysis and numerical simulations, we demonstrate how key parameters, including the squeezing probability and chemotactic sensitivity, can generate these two-dimensional patterns, providing a direct mechanistic explanation for the emergence of diverse lesion morphologies. These insights allow a better understanding of atypical lesion formation in Multiple Sclerosis, capturing features that cannot be reproduced by one-dimensional models or by approaches relying solely on oligodendrocyte loss.
		
		These findings highlight the need for further analysis in future works, particularly considering the number of dominant eigenmodes larger than two. In addition, a wavefront invasion analysis of the problem could be performed to describe the spatio-temporal evolution of patterns.
		
		\subsection*{Acknowledgments}
		We are thankful to the anonymous reviewers for their useful comments and suggestions.
		This work was performed in the framework of activities sponsored by the Italian National Group of
		Mathematical Physics (GNFM-INdAM) and by the University of Parma (Italy) and of Naples Federico II (Italy). 
		RT is a post-doc fellow supported by the National Institute of Advanced Mathematics (INdAM), Italy. 
		RDM and RT also thank the support of the University of Parma through the action Bando di Ateneo per la ricerca 2022, co-funded by MUR-Italian Ministry of Universities and Research - D.M. 737/2021 - PNR - PNRR - NextGenerationEU  `Collective and self-organised dynamics: kinetic and network approaches'. 
		The work  was carried out in the frame of activities sponsored by the COST Action CA18232, by the Portuguese Projects UIDB/00013/2020 (\url{https://doi.org/10.54499/UIDB/00013/2020}), UIDP/00013/2020 of CMAT-UM (\url{https://doi.org/10.54499/UIDP/00013/2020}), 
		and by the Portuguese national funds (OE), through the Project FCT/MCTES  PTDC/03091/2022, 
		``Mathematical Modelling of Multi-scale Control Systems: applications to human diseases -- CoSysM3''
		(\url{https://doi.org/10.54499/2022.03091.PTDC}).
		%
		
		\appendix	
		\section{Proof of results in Subsection \ref{SubWeak}}\label{App}
		\subsection{Proof of Proposition \ref{Prop2}}\label{App1}
		As stated above, we are interested in describing spatial configurations arising from two pairs of eigenvectors. For this reason, we look for a solution of \eqref{SistOrd1} in the form
		\be\label{expu1sol}
		\left(
		\begin{array}{c}u_1\\v_1\\w_1 \end{array}
		\right)=\sum_{j=1}^2\left[{\mathbf B}_j(t)\,e^{i\,{\bf k_j\cdot\bx}}+\overline{\mathbf B}_j(t)\,e^{-i\,{\bf k_j\cdot\bx}}\right],
		\ee
		i.e., we express the solution in terms  of two active dominant pairs of eigenmodes
		$({\bf k}_j , -{\bf k}_j )$, $j=1,2$, individuating angles of $\pi/2$,  with $|{\bf k}_j| = k_c$ \cite{Ouyang2000,walgraef2012spatio}.
		Being \eqref{expu1sol} solution of \eqref{SistOrd1}, we have
		$$
		0=\mathcal L_c\,\left(
		\begin{array}{c}u_1\\v_1\\w_1 \end{array}
		\right)=\left(\mathbb J-k_c^2\,\mathbb D_c\right)\,\left(
		\begin{array}{c}u_1\\v_1\\w_1 \end{array}
		\right).
		$$
		As shown in Subsection \ref{subsec:2.2}, though, one has Det$\left(\mathbb J-k_c^2\,\mathbb D_c\right)=0$, and, since,
		$$
		\mathbb J-k_c^2\,\mathbb D_c=
		\begin{pmatrix}
			-1 - k_c^2\, \tilde{\Phi}_0
			& k_c^2\, \xi_c\, \tilde{\Phi}_1 
			& 0 \\[3mm]
			\beta 
			& -k_c^2\, \delta- \tau
			& 0 \\[3mm]
			\dfrac{ \zeta\, \theta\, (1 + 2\Omega) }{ (1 + \Omega)\, (\zeta + \theta + \zeta\, \Omega) } 
			& 0 
			& -\dfrac{ \zeta + \theta + \zeta\, \Omega }{ 1 + \Omega }
		\end{pmatrix},
		$$
		we can observe that the last two rows of the matrix are linearly independent, thus the kernel of $\mathbb J-k_c^2\,\mathbb D_c$ has dimension $1$.  
		Consequently, the solution \eqref{expu1sol} is of the form
		$$
		\left(
		\begin{array}{c}u_1\\ v_1\\ w_1 \end{array}
		\right)=\left(
		\begin{array}{c}\rho_R\\1\\\rho_E \end{array}
		\right)\sum_{j=1}^2\left[{\mathcal W}_j(t)\,e^{i\,{\bf k_j\cdot\bx}}+\overline{\mathcal W}_j(t)\,e^{-i\,{\bf k_j\cdot\bx}}\right],
		$$
		and, recalling the expression for $k_c^2$ in \eqref{CappaCri}, the expressions for $\rho_R$ and $\rho_E$ read as in \eqref{rhoR} and \eqref{rhoE}, respectively. This completes the proof.
		\begin{flushright}$\square$\end{flushright}
		
		\subsection{Proof of Proposition \ref{Prop3}}\label{App2}
		First, we recast \eqref{EqOrd2} as 
		\be\label{EqOrd2.2}
		\mathcal L_c
		\left(
		\begin{array}{c}u_2 \\ v_2\\w_2 \end{array}
		\right)
		= 
		\frac{\pa}{\pa T_1}\left(
		\begin{array}{c}u_1 \\ v_1\\w_1 \end{array}
		\right)
		- \mathcal H_2\left[\left(
		\begin{array}{c}u_1 \\ v_1\\w_1 \end{array}
		\right)\right].
		\ee
		We now exploit the Fredholm solvability condition, which ensures that, being $B$ a Fredholm operator on a Banach (or Hilbert) space and $b$ an element in the space, the inhomogeneous equation $Bx = b$ admits at least one solution if and only if $b$ is orthogonal to every element in the kernel of the adjoint operator $B^+$; in other words, $Bx = b$ is solvable if and only if  $\left\langle b, v \right\rangle = 0$ for all $v \in \ker(B^+)$.
		In this particular case, the operator $\mathcal L_c$ may be interpreted as a linear continuous operator from the Banach space $\mathcal Z=\left(H^2(\Gamma_{\bf x})\right)^3$, generated by the functions $e^{i\,{\tilde{\bf k}\cdot\bx}}$, $\tilde{\bf k}\in \mathbb{R}^2$, to $\mathcal X=\left(L^2(\Gamma_{\bf x})\right)^3$, with the inner product defined as the classical product in $L^2$, see \cite{da2002second,haragus2010local}  for the details. Consequently, the solvability condition states that, for a nontrivial solution to exist, the right-hand side of \eqref{EqOrd2.2} must be orthogonal to the kernel of the adjoint operator of $\mathcal L_c$, which we indicate by $\mathcal L_c^+$ and reads 
		$$
		\mathcal L_c^+  = \begin{pmatrix}
			-1 -\tilde k^2\, \tilde{\Phi}_0
			& \beta
			& 0 \\[3mm]
			\tilde k^2\, \xi_c\, \tilde{\Phi}_1 
			&  -\tilde k^2\, \delta- \tau
			& \quad\dfrac{ \zeta\, \theta\, (1 + 2\Omega) }{ (1 + \Omega)\, (\zeta + \theta + \zeta\, \Omega) }  \\[3mm]
			0
			& 0 
			& -\dfrac{ \zeta + \theta + \zeta\, \Omega }{ 1 + \Omega }
		\end{pmatrix},
		$$ being $\tilde k=|\tilde{\bf k}|$, and whose kernel is spanned by 
		\be { \label{sigmaVec}
			\left(
			\begin{array}{c}1\\\sigma_C\\0 \end{array}
			\right)e^{ i\,{\bf k\cdot\bx}}+\mbox{c.c.},\quad |{\bf k}|=k_c,\quad}	
		\sigma_C=\dfrac{1}{\beta} \left(1 + \sqrt{\dfrac{\tau \, \Phi_0}{\delta}}\right).\ee
		
		At this point, we substitute \eqref{Expu1} into the right-hand side of \eqref{EqOrd2.2}, which results in a linear combination of terms  $e^0$, $e^{i\,\bk_j\cdot\bx}$, $e^{2\,i\,\bk_j\cdot\bx}$, $e^{\,i\,(\bk_j-\bk_l)\cdot\bx}$. By isolating the coefficients corresponding to  $e^{i\,\bk_j\cdot\bx}$, they write, for $ j=1,2,$
		\be\label{Rj}
		\begin{aligned}
			\left(
			\begin{array}{c}R_u^j \\[2mm] R_v^j\\[2mm]R_w^j \end{array}
			\right)&=\left(
			\begin{array}{c}\rho_R\\1\\\rho_E \end{array}
			\right)\frac{\pa \mathcal W_j}{\pa T_1}-\left(
			\begin{array}{c}			\xi_1\,\tilde{\Phi}_1\,k_c^2\,\mathcal W_j
				\\0\\0  \end{array}
			\right).
		\end{aligned}
		\ee

		The solvability condition implies the terms  $\begin{pmatrix}
			R_u^j,  R_v^j ,  R_w^j
		\end{pmatrix}^T e^{i\,{\bf k_{\it j}\cdot\bx}} $ to be orthogonal to \eqref{sigmaVec} and thus\\
		$$\left\langle
		\begin{pmatrix}
			R_u^j \\[2mm]  R_v^j \\[2mm]  R_w^j
		\end{pmatrix}
		,
		\begin{pmatrix}
			1 \\[2mm]  \sigma_C \\[2mm]  0
		\end{pmatrix}\right\rangle 
		= 0,\quad \mbox{ for } j=1,2.$$ Consequently, by looking at \eqref{Rj}, we obtain \eqref{Solv1}. 
		
		Successively, we have that the complete expression on the right-hand side of \eqref{EqOrd2.2} results 
		\ben
		\begin{aligned}
			& \left(
			\begin{array}{c}2\,\rho_R^2 \\[1mm] 0\\[1mm]
				\Psi_0
			\end{array}
			\right)\left(|\mathcal W_1|^2+|\mathcal W_2|^2\right)+\sum_{j=1}^2 \left(
			\begin{array}{c}R_u^j \\[2mm] R_u^j\\[2mm]R_w^j \end{array}
			\right)\,\,e^{i\,\bk_j\cdot\bx}\\[2mm]
			&\quad+\left(
			\begin{array}{c}2 \, \rho_R \left( \rho_R + k_c^2 \left( \rho_R \tilde{\Phi}_0' - \xi_c \rho_c \tilde{\Phi}_1' \right) \right)
				\\[1mm] 0\\[1mm]
				\Psi_1
			\end{array}
			\right)\\
			&\quad\times\left[
			\mathcal W_1\, \mathcal W_2\,e^{i\,(\bk_1+\bk_2)\cdot\bx}
			+\mathcal W_1\,\overline{ \mathcal W}_2\,e^{i\,(\bk_1-\bk_2)\cdot\bx}        
			\right]\\
			&\quad+\sum_{j=1}^2
			\left(
			\begin{array}{c}\,\rho_R^2+2\,k_c^2\left(\rho_R^2\,\tilde{\Phi}_0'-\xi_c\,\rho_R\,\tilde{\Phi}_1'\right)
				\\[1mm] 0\\[1mm] \Psi_2
			\end{array}
			\right)\,\mathcal W_j^2\,e^{2\,i\,\bk_j\cdot\bx}+\mbox{ c.c.},
		\end{aligned}
		\ee
		with
		$$
		\Psi_0=	\dfrac{4 \, \theta \, \rho_R}{(1 + \Omega)^2} 
		\left( 
		\rho_E \, (1 + 2 \, \Omega) + \rho_R \, \left(1 + \dfrac{\theta}{\zeta} - \Omega - \dfrac{(\zeta + \theta)^2}{\zeta \, (\theta + \zeta \, (1 + \Omega))} \right)
		\right),   
		$$
		$$
		\Psi_1=\dfrac{4 \, \Theta \, \rho_R}{(1 + \Omega)^2}      \left( \rho_E (1 + 2 \Omega) - \dfrac{\zeta \, \rho_R \, \Omega^2}{\Theta + \zeta (1 + \Omega)} \right),
		$$
		$$
		\Psi_2=\dfrac{2 \, \theta \, \rho_R}{(1 + \Omega)^2} 
		\left( 
		\rho_E \, (1 + 2 \, \Omega) - \dfrac{\zeta \, \rho_R \, \Omega^2}{\theta + \zeta \, (1 + \Omega)} 
		\right).
		$$
		At this point, we claim that the solution $(u_2, v_2,w_2)^T$ for the non-homogeneous problem  \eqref{EqOrd2.2} has to be in the form 
		$$
		\begin{aligned}
			\left(
			\begin{array}{c}u_2 \\ v_2\\w_2 \end{array}
			\right)
			=& \left(
			\begin{array}{c}X_0 \\ Y_0\\Z_0 \end{array}
			\right)\left(|\mathcal W_1|^2+|\mathcal W_2|^2\right)\\
			&+\sum_{j=1}^2\left[ \left(
			\begin{array}{c}\rho_R \\ 1\\\rho_E \end{array}
			\right)\,\mathcal V_j+\mathcal{L}c^{-1}\left(
			\begin{array}{c}R_U^j \\[2mm] R_V^j\\[2mm]R_W^j \end{array}
			\right)\,\mathcal W_j\right]\,e^{i\,\bk_j\cdot\bx}
			\\
			&+ \left(
			\begin{array}{c}X_{1} \\ Y_{1}\\Z_{1}\end{array}
			\right)\,\left[\mathcal W_1\, \mathcal W_2\,e^{i\,(\bk_1+\bk_2)\cdot\bx}
			+\mathcal W_1\,\overline{ \mathcal W}_2\,e^{i\,(\bk_1-\bk_2)\cdot\bx}\right]
			\\
			&+\sum_{j=1}^2 \left(
			\begin{array}{c}X_{2} \\ Y_{2}\\Z_{2}\end{array}
			\right)\,\mathcal W_j^2\,e^{2\,i\,\bk_j\cdot\bx}+\mbox{ c.c.}.
		\end{aligned}
		$$
		Then, the left-hand side of \eqref{EqOrd2.2} reads
		$$
		\begin{aligned}
			\mathcal L_c&\left(
			\begin{array}{c}u_2 \\ v_2\\w_2 \end{array}
			\right)
			= \mathbb J \left(
			\begin{array}{c}X_0 \\ Y_0\\Z_0 \end{array}
			\right)\left(|\mathcal W_1|^2+|\mathcal W_2|^2\right)
			\\&+\left( \mathbb J-2\,k_c^2\mathbb D_c\right)\left(
			\begin{array}{c}X_{1} \\ Y_{1}\\Z_{1}\end{array}
			\right)
			\left[
			\mathcal W_1\, \mathcal W_2\,e^{i\,(\bk_1+\bk_2)\cdot\bx}
			+\mathcal W_1\,\overline{ \mathcal W}_2\,e^{i\,(\bk_1-\bk_2)\cdot\bx}        
			\right]\\
			&+\left( \mathbb J-4k_c^2\mathbb D_c\right)\sum_{j=1}^2 \left(
			\begin{array}{c}X_{2} \\ Y_{2}\\Z_{2}\end{array}
			\right)\,\mathcal W_j^2\,e^{2\,i\,\bk_j\cdot\bx}+\mbox{ c.c.},
		\end{aligned}
		$$			
		where we have removed the inhomogeneous part in the $e^{i\,\bk_j\cdot\bx}$-coefficient to avoid secular terms.
		To recover the solution of \eqref{EqOrd2.2}, then, 
		we must solve the linear equations having as unknown the coefficients $X_m, Y_m, Z_m$, $m=0,1,2$, given by
		$$
		\left(
		\begin{array}{c}X_0 \\ Y_0\\Z_0 \end{array}
		\right)=
		\mathbb J\,^{-1}
		\left(
		\begin{array}{c}2\,\rho_R^2 \\[1mm] 0\\[1mm]
			\Psi_0   
		\end{array}
		\right),
		$$
		$$
		\left(
		\begin{array}{c}X_1 \\ Y_1\\Z_1 \end{array}
		\right)
		=
		\left(
		\mathbb J-2k_c^2\mathbb D_c\right)^{-1}
		\left(
		\begin{array}{c}2 \, \rho_R \left( \rho_R + k_c^2 \left( \rho_R \tilde{\Phi}_0' - \xi_c \rho_c \tilde{\Phi}_1' \right) \right)
			\\[1mm] 0\\[1mm]
			\Psi_1
		\end{array}
		\right),
		$$
		$$
		\left(
		\begin{array}{c}X_2 \\ Y_2\\Z_2 \end{array}
		\right)
		= \left(
		\mathbb J-4k_c^2\mathbb D_c\right)^{-1}
		\left(
		\begin{array}{c}\,\rho_R^2+2\,k_c^2\left(\rho_R^2\,\tilde{\Phi}_0'-\xi_c\,\rho_R\,\tilde{\Phi}_1'\right)
			\\ 0\\ \Psi_2
		\end{array}
		\right).$$
		The resolution of the three linear equations above provides the expressions for the coefficients given in \eqref{CoefX0}-\eqref{CoefX2}, 
		with
		$$
		\Xi_0=\frac{\zeta\, \theta\, \left(\zeta + 3\, \zeta\, \Omega + \theta\, (3 + 8\, \Omega)\right)}{\left(\zeta + \theta + \zeta\, \Omega\right)^3}
		,
		$$
		$$
		\begin{aligned}
			\Xi_1=&\frac{
				2\, \left(\sqrt{\delta} + 2\, \sqrt{\tau\,\tilde{\Phi}_0}\right)}{
				\sqrt{\delta}\, \left(\sqrt{\delta\,\tilde{\Phi}_0}\, (\tilde{\Phi}_1 - \tilde{\Phi}_1') + \sqrt{\tau}\, (\tilde{\Phi}_0'\, \tilde{\Phi}_1 - \tilde{\Phi}_0\, \tilde{\Phi}_1')\right)\, \left(\zeta + \theta + \zeta\, \Omega\right)^3
			}\\&\times
			\left[
			\sqrt{\tau}\, \left(\tilde{\Phi}_0'\, \tilde{\Phi}_1 - \tilde{\Phi}_0\, \tilde{\Phi}_1'\right)\, (1 + \Omega)\, \left(\zeta + \theta + \zeta\, \Omega\right)^2\right.\\&\left.
			+ \sqrt{\delta\,\tilde{\Phi}_0}\, \left(
			\theta^2\, (\tilde{\Phi}_1 - \tilde{\Phi}_1')\, (1 + \Omega)\right.\right.\\&\left.\left.
			- 2\, \zeta\, \theta\, \left(\tilde{\Phi}_1'\, (1 + \Omega)^2 + \tilde{\Phi}_1\, \left(\theta + 3\, \theta\, \Omega - (1 + \Omega)^2\right)\right)\right.\right.\\&\left.\left.
			+ \zeta^2\, \left(\tilde{\Phi}_1 - \tilde{\Phi}_1'\, (1 + \Omega)^3 + \tilde{\Phi}_1\, \Omega\, \left(3 + \Omega\, (3 + 2\, \theta + \Omega)\right)\right)
			\right)
			\right]
			,
		\end{aligned}
		$$
		$$
		\begin{aligned}
			\Xi_2=&\frac{-
				\zeta\, \theta\, \sqrt{\tau}}{
				\sqrt{\tilde{\Phi}_0}\,
				\left(
				-\sqrt{\delta\,\tilde{\Phi}_0}\, (\tilde{\Phi}_1 - 2\, \tilde{\Phi}_1')
				+ 2\, \sqrt{\tau}\, \left(-\tilde{\Phi}_0'\, \tilde{\Phi}_1 + \tilde{\Phi}_0\, \tilde{\Phi}_1'\right)
				\right)\,
				\left(\zeta + \theta + \zeta\, \Omega\right)^3
			}\,\\&\times
			\left[
			4\, \delta\, \sqrt{\tilde{\Phi}_0}\, (\tilde{\Phi}_1 - 2\, \tilde{\Phi}_1')\, (1 + 2\, \Omega)\, \left(\zeta + \theta + \zeta\, \Omega\right)\right.\\&\left.
			+ 2\, \tau\, \sqrt{\tilde{\Phi}_0}\, (\tilde{\Phi}_0'\, \tilde{\Phi}_1 - \tilde{\Phi}_0\, \tilde{\Phi}_1')\, (1 + 2\, \Omega)\, \left(\zeta + \theta + \zeta\, \Omega\right)\right.\\&\left.
			+ \sqrt{\delta\,\tau}\, \left(
			8\, \theta\, \tilde{\Phi}_0'\, \tilde{\Phi}_1\, (1 + 2\, \Omega)
			+ 8\, \zeta\, \tilde{\Phi}_0'\, \tilde{\Phi}_1\, (1 + \Omega)\, (1 + 2\, \Omega)\right.\right.\\&\left.\left.
			+ \theta\, \tilde{\Phi}_0\, \tilde{\Phi}_1\, (19 + 56\, \Omega)
			- 10\, \theta\, \tilde{\Phi}_0\, (\tilde{\Phi}_1' + 2\, \tilde{\Phi}_1'\, \Omega)\right.\right.\\&\left.\left.
			+ \zeta\, \tilde{\Phi}_0\, \left(\tilde{\Phi}_1 + \tilde{\Phi}_1\, (3 - 16\, \Omega)\, \Omega - 10\, \tilde{\Phi}_1'\, (1 + \Omega)\, (1 + 2\, \Omega)\right)
			\right)
			\right]
			.
		\end{aligned}
		$$
		completing the proof.
		\begin{flushright}$\square$\end{flushright}
		\subsection{Proof of Proposition \ref{Prop4}}\label{App3}  
		We start by rewriting \eqref{EqOrd3} as
		\ben
		\mathcal L_c
		\left(
		\begin{array}{c}u_3 \\ v_3\\w_3 \end{array}
		\right)=
		\frac{\pa}{\pa T_1}\left(
		\begin{array}{c}u_2 \\ v_2\\w_2 \end{array}
		\right)+	
		\frac{\pa}{\pa T_2}\left(
		\begin{array}{c}u_1 \\ v_1\\w_1 \end{array}
		\right)- \mathcal H_3\left[\left(
		\begin{array}{c}u_1 \\ v_1\\w_1 \end{array}
		\right),\left(
		\begin{array}{c}u_2 \\ v_2\\w_2 \end{array}
		\right)\right].
		\ee
		As done before, we rely on the Fredholm solvability condition. We substitute the expressions \eqref{Expu1} and \eqref{Expu2} in the right-hand term of \eqref{EqOrd3}, and isolate the  \eqref{Solv1}, that results
		\ben
		\begin{aligned}
			\left(
			\begin{array}{c}S_u^j \\[2mm] S_v^j\\[2mm]S_w^j \end{array}
			\right)&=\left(
			\begin{array}{c}\rho_R\\1\\\rho_E \end{array}
			\right)\left(\frac{\partial \mathcal V_j}{\partial T_1} + \frac{\partial \mathcal W_j}{\partial T_2}\right)-\left(
			\begin{array}{c}			\tilde{\Phi}_1\,k_c^2\,\left(\xi_1\,\mathcal V_j+\xi_2\,\mathcal W_j\right)
				\\0\\0  \end{array}
			\right)\\& +
			\left[\left(\begin{array}{c}-r_1\\0\\t_1  \end{array}\right)
			\,|\mathcal W_j|^2 + \left(\begin{array}{c}-r_2\\0\\t_2  \end{array}\right)\,|\mathcal W_l|^2 \right]\,\mathcal W_j,                     
		\end{aligned}
		\ee
		for $j,l=1,2$, $j\neq l$, with coefficients $r_1$ and $r_2$ given in \eqref{r1r2} and
		\ben
		\begin{aligned}
			t_1 &= q_1 \left( q_2 + (Z_0 + Z_2)\, q_3 + (X_0 + X_2)\, q_4 \right), \\
			t_{2} &= q_1 \left( 2\, q_2 + (Z_0 + 2\, Z_1)\, q_3 + (X_0 + 2\, X_1)\, q_4 \right), \\
			q_1 &= \frac{2\, \theta\, \rho_R}{(1 + \Omega)^3\, \left(\zeta + \theta + \zeta\, \Omega\right)^2},\quad 
			q_2 = 9\, \zeta\, \rho_R^2\, \Omega^2\, \left(\zeta + \theta + \zeta\, \Omega\right), \\
			q_3 &= (1 + \Omega)\, (1 + 2\, \Omega)\, \left(\theta + \zeta\, (1 + \Omega)\right)^2, \\
			q_4 &= \zeta\, (1 + \Omega)\, \left(\theta - 2\, \zeta\, \Omega^2\, (1 + \Omega) + 2\, \theta\, \Omega\, (2 + \Omega)\right).
		\end{aligned}
		\ee
		At this point, imposing $$\left\langle
		\begin{pmatrix}
			S_u^j \\[2mm]  S_v^j \\[2mm]  S_w^j
		\end{pmatrix}
		,
		\begin{pmatrix}
			1 \\[2mm]  \sigma_C \\[2mm]  0
		\end{pmatrix}\right\rangle 
		= 0,\quad \mbox{ for } j=1,2,$$
		with $\sigma_C$ defined in \eqref{sigmaVec}, we find the relation \eqref{Solv2}, completing the proof.
		\begin{flushright}$\square$\end{flushright}
		\subsection{Proof of Proposition \ref{Prop5}} \label{App4}
		First, equations \eqref{Solv1} - \eqref{Solv2} allow us to rewrite \eqref{dAjdt}, for  $j,l=1,2$, $l\neq j,$, as
		\ben
		\begin{aligned}
			\left(\rho_R + \sigma_C\right)\frac{d {A}_j}{d t}=&k_c^2\,\tilde{\Phi}_1\left(\mathcal W_j\left(\eta^2\,\xi_1\,+\eta^3\,\xi_2\right)+ \eta^3\,\xi_1\,\mathcal V_j\right)\\&+\eta^3\,\left(r_1\,|\mathcal W_j|^2 + r_2\,|\mathcal W_l|^2 \right)\,\mathcal W_j,
		\end{aligned}
		\ee 
		omitting the higher-order terms.
		
		From \eqref{expxi} we may write, still omitting higher-order terms, $$\eta\, \xi_1=\xi-\xi_c-\eta^2\xi_2,$$
		obtaining 
		\ben
		\begin{aligned}
			\left(\rho_R + \sigma_C\right)\frac{d {A}_j}{d t}=&k_c^2\,\tilde{\Phi}_1\left(\xi-\xi_c\right)\left(\eta\,\mathcal W_j+ \eta^2\,\xi_1\,\mathcal V_j\right)\\&+\eta^3\,\left(r_1\,|\mathcal W_j|^2 + r_2\,|\mathcal W_l|^2 \right)\,\mathcal W_j,
		\end{aligned}
		\ee
		which, from \eqref{ExpAj}, becomes
		\be\label{SistAj}
		s_0\frac{d A_j}{d t}=\xi_m \,A_j+A_j\,\left(s_1\,|{A_j}|^2+s_2\,|{A_l}|^2\right),
		\ee
		where 
		\ben s_0=\dfrac{\rho_R+\sigma_C}{k_c^2\,\xi_c\,\tilde{\Phi}_1},\quad \xi_m=\dfrac{\xi-\xi_c}{\xi_c},\quad s_i=\dfrac{r_i}{k_c^2\,\xi_c\,\tilde{\Phi}_1}, \ee
		with $i=1,\,2$.
		
		By decomposing each amplitude into modulus and phase angle as $$A_j=\rho_j\,e^{i\,\phi_j},$$ and splitting the real and imaginary parts of equations \eqref{SistAj}, we get the system:
		\ben
		\begin{aligned}
			s_0\left(\rho_1\, \frac{d\phi_1}{dt}\, \cos{\phi_1} + \frac{d\rho_1}{dt}\,\sin{\phi_1}\right)
			&= \rho_1\, \left( \xi_m + s_1\, \rho_1^2 + s_2\, \rho_2^2 \right)\, \sin{\phi_1}, \\[6pt]
			s_0\left(\rho_2\, \frac{d\phi_2}{dt}\,\, \cos{\phi_2} + \frac{d\rho_2}{dt}\, \sin{\phi_2}\right)
			&= \rho_2\, \left( \xi_m + s_1\, \rho_2^2 + s_2\, \rho_1^2 \right)\, \sin{\phi_2}, \\[6pt]
			s_0\left(\frac{d\rho_1}{dt}\, \cos{\phi_1} - \rho_1\, \frac{d\phi_1}{dt}\, \sin{\phi_1}\right)
			&= \rho_1\, \left( \xi_m + s_1\, \rho_1^2 + s_2\, \rho_2^2 \right)\, \cos{\phi_1}, \\[6pt]
			s_0\left(\frac{d\rho_2}{dt}\, \cos{\phi_2} - \rho_2\, \frac{d\phi_2}{dt}\, \sin{\phi_2}\right)
			&= \rho_2\, \left( \xi_m + s_1\, \rho_2^2 + s_2\, \rho_1^2 \right)\, \cos{\phi_2},
		\end{aligned}
		\ee
		that provides
		\ben
		\begin{aligned}
			\frac{d\phi_1}{dt}&=0\\
			\frac{d\phi_2}{dt}&=0\\
			s_0\,\frac{d \rho_1}{d t} &= \xi_m\,\rho_1 +s_1 \, \rho_1^3 +s_2\,\rho_2^2\, \rho_1
			\\
			s_0\,\frac{d \rho_2}{d t} &= \xi_m \,\rho_2 +s_1 \, \rho_2^3 +s_2\,\rho_1^2  \, \rho_2,
		\end{aligned}
		\ee
		completing the proof.
		\begin{flushright}$\square$\end{flushright}
		
		\bibliographystyle{abbrv}
		\bibliography{biblioAbb}
		%
		%
		%
		%
		%
	\end{document}